\newcommand{\T}{^{\mbox{\tiny T}}}
\newcommand{\R}{\mathbb{R}}
\newcommand{\eps}{\varepsilon}
\let\leq\leqslant
\let\geq\geqslant
\newenvironment{proof}[1][Proof]%
{\par\noindent\textit{#1:\ }}%
{\hspace*{\fill} \rule{6pt}{6pt}}
\newenvironment{proof*}[1][Proof]%
{\par\noindent\textit{#1:\ }}{}
\newenvironment{system*}[1]%
{\setlength{\arraycolsep}{0.5mm} \begin{array}{#1}}%
  {\end{array}}
\newtheorem{lemma}{\textbf{Lemma}}
\newtheorem{definition}{\textbf{Definition}}
\newtheorem{theorem}{\textbf{Theorem}}
\newtheorem{remark}{\textbf{Remark}}
\newtheorem{example}{\textbf{Example}}
\begin{document}

\def\figurename{Fig.}

\title{On MIMO Stability Analysis Methods Applied to Inverter‑Based Resources
Connected to Power Systems}    
\author{Anton A. Stoorvogel, \IEEEmembership{Senior Member, IEEE}, and
  Saeed Lotfifard, \IEEEmembership{Senior Member, IEEE} Ali Saberi,
	\IEEEmembership{Life Fellow, IEEE}, 
\thanks{Anton A. Stoorvogel is with the Department of Electrical
	Engineering, Mathematics and Computer Science, University of
	Twente, P.O. Box 217, Enschede, The Netherlands (e-mail:
	A.A.Stoorvogel@utwente.nl)}
\thanks{Saeed Lotfifard and Ali Saberi are with the
	School of Electrical Engineering and Computer Science, Washington
	State University, Pullman, WA 99164, USA (e-mail: s.lotfifard@wsu.edu,saberi@wsu.edu)}
}
\maketitle
\begin{abstract}
  This paper presents a critical review of methods
  commonly employed in the literature for small signal stability analysis of
  inverter based resources (IBRs). It discusses the intended purposes
  of these methods and outlines both their proper and improper
  implementations. The paper provides insights into the applicability
  of these techniques, clarifies their inherent limitations, and
  discusses and illustrates common sources of misinterpretation.   
\end{abstract}

\begin{IEEEkeywords}
  stability analysis methods, stability test, robustness test
\end{IEEEkeywords}

\section{Introduction}

Numerous methods have been proposed for small-signal stability
analysis of inverter-base resources (IBRs) in the recent years. A
significant number of papers have focused on applying frequency-domain
stability analysis to multi-input multi-output (MIMO) systems. While
mainstream control theory has largely moved away from classical
frequency-domain tools, such as Nyquist plots or gain and phase
margins, for MIMO systems, many recent studies on IBRs revisit these
tools in their analyses. A review of this body of work, comprising
more than one hundred papers, reveals some common mistakes:
\begin{itemize}
\item failing to clearly and formally define the objective
  of the analysis 
\item selecting inappropriate MIMO stability analysis techniques for
  the specific application,
\item incorrect implementation or misapplication of the chosen
  methods,
\end{itemize}
This paper provides an in-depth discussion of established MIMO
stability methods and explains their proper use in this
context. Regarding the first point, the papers often state that their
main objective is to check the stability of a closed-loop linear
system with a known transfer matrix. However, this is trivial to
check. It seems that many of these papers have secondary aims. The
papers check properties like gain and phase margin. But this is not
needed if the sole purpose is to check stability so one must assume
that the authors want to guarantee stability given model inaccuracies
but this is not clarified.

The rest of this paper is organized as follows. Section \ref{II}
presents a brief discussion of SISO stability analysis, which provides
background and conceptual grounding for the MIMO methods discussed in
later sections. Section \ref{IIIa}, \ref{IIIb}, \ref{IIIc}, \ref{IIId}
and \ref{IIIe} discuss different aspects of MIMO stability
analysis. Finally, conclusions are presented in section \ref{IV},

\section{Single-input, single-output (SISO) systems}\label{II}

In many papers on IBRs stability analysis such as
\cite{rosso-andresen-engelken-liserre,sun,wen-burgos-boroyevich-mattavelli-shen},
the stated purpose of the paper is to verify the stability of a
particular system often given in the form of a feedback loop as in
Fig.\ref{pic0}
\begin{figure}[t]
  \centering
  \begin{tikzpicture}[every node/.style={outer sep=0pt,thick},
    decorate, scale=0.4]
    \draw (4,6) rectangle (8,10) node[pos=.5] {\large $P$};
    \draw[very thick] (2,8) circle [radius=0.3];
    \node at (2.5,6.5)  {\tiny $-$};
    \node at (1,8.5)  {\tiny $+$};    
    \draw[-{Latex[length=2mm]}]  (8,8) -- (14,8);
    \draw[-{Latex[length=2mm]}]  (0,8) -- (1.7,8);
    \draw[-{Latex[length=2mm]}]  (2.3,8) -- (4,8);
    \draw  (11,8) -- (11,5);
    \draw  (2,5) -- (11,5);
    \draw[-{Latex[length=2mm]}]  (2,5) -- (2,7.7);
  \end{tikzpicture}
  \caption{Standard feedback loop}\label{pic0}
\end{figure}
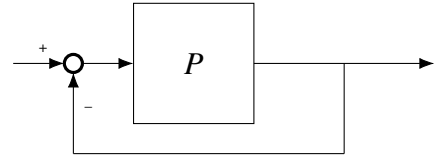
where $P$ is a linear system. In the case of single-input,
single-output (SISO) systems the system can be described by a transfer
function $P(s)$ and the transfer function of the intersection is then
given by
\[
  \frac{P(s)}{1+P(s)}=1-\frac{1}{1+P(s)}
\]
If
\[
  P(s)=\frac{n(s)}{d(s)}
\]
with $n$ and $d$ the numerator and denominator polynomials then
stability of this interconnection basically amounts to checking that
the polynomial $n(s)+d(s)$ has all zeros in the open left half plane
which is easily verified by, for instance, the Routh-Hurwitz test, see
\cite{meinsma3}. However, in the IBRs stability analysis literature
commnly use the Nyquist criterion or the Bode plot. In the Nyquist
criterion one examines whether the closed curve $P(j\omega)$ with
$\omega\in \R$ with increasing $\omega$ encircles the point $-1$. The
conclusion, that the Nyquist curve does not encircle or intersect with
the point $-1$ and hence the system is stable is often made. However,
this is only true if the system $P(s)$ is stable. In general the
number of counterclockwise encirclements should be equal to the number
of unstable poles of the system $P(s)$ (and checking the stability of
$P(s)$ is as much work as checking the stability of the closed loop
system). If $P(s)$ has poles on the imaginary axis, one has to be a
bit more careful as clearly explained in for instance
\cite{franklin-powell-emami-naeini}. Some methods only check whether
the Nyquist curve does not intersect the the line segment
$(-\infty,-1)$ (for instance easy to check using the Bode
plot). However, this needs some words of caution:
\begin{itemize}
\item This is not a necessary condition. There are systems which
  yields a stable closed loop system even though the Nyquist curve
  intersects the line segment $(-\infty,-1)$ on the real axis. An
  example of this phenomenon is given by:
  \[
    P(s)=\tfrac{5(s+10)^2}{(s+0.3)^3}
  \]
\item One has to be careful what is meant by intersection. The system
  \[
    P(s)=\tfrac{1-2s}{1+s}
  \]
  yields an unstable closed loop system but the Nyquist curve only
  intersects $(-\infty,-1)$ in the limit, i.e. the following holds:
  \[
    \lim_{\omega\rightarrow \infty} P(j\omega)=-2
  \]
  is on the interval  $(-\infty,-1)$.
\end{itemize}
Also there are papers that only plot the Nyquist curve for positive
frequencies which is a bad idea because one has no closed curves
making it hard to see the encirclements.

Given all the above, why do many studies on inverter‑based resource
(IBR) stability analysis employ the Nyquist criterion instead of just
checking the stability of the closed-loop transfer function? Many
papers, do not explicitly give a reason but stability of a system is
clearly only one condition that want to impose on a closed loop
system. Disturbances may affect the system. It is desirable to have a
system with limited overshoot and quick response to command
signals. The model might not be perfect due to uncertain parameters,
ignoring fast dynamics or nonlinearities and the like. For SISO
systems, the Bode plot (amplitude and phase) completely describes the
system. The Nyquist curve almost completely describes the system (one
cannot determine bandwidth from the Nyquist contour).  By checking
properties like gain and phase margin (which will be discussed in more
detail in the next section) but also high frequency rolloff and the
like one can easily analyse the properties of the system. These
properties are easily checked using the Nyquist curve or the Bode
plot.  One can generally say that a decent gain and phase margin
implies that the system behaves decently. It can handle model
uncertainty, it does not have a huge overshoot and the like. There are
some exceptions (often mathematically constructed) and clearly more
analysis can be done but this is a decent rule of thumb. Hence for
SISO systems Nyquist curve is commonly used to check gain/phase margin
and analyse the Bode plot to investigate the frequency behavior. More
importantly it also is a beautiful tool for controller design through
loopshaping. For SISO systems this frequency domain analysis is still
standard and is taught in almost all undergraduate control classes.

\section{Multiple-input, multiple-output (MIMO) systems}\label{IIIa}

In the 1980s control theory moved to multi-input, multi-output systems
and controller design. The advances in computing made this possible
and it was required to achieve objectives in many
applications. Therefore, there was an urgency felt to expand
loopshaping and other frequency-domain techniques to the multi-input,
multi-output case. This was not easy. For instance, the concept of
zeros becomes much more difficult because of the directionaliy issues
introduced in the MIMO context. Many other issues arose, some of which
outlined below, but essentially the conclusion in the area of control
was that to properly handle MIMO systems one has to move to state
space models and use different tools. For instance, gain and phase
margins can not be extended to the MIMO context in a reasonable and
effective way as will be explained later. Effectively, since around
year 2000, the mainstream control community has abandoned studying
MIMO systems in the frequency-domain using tools like Nyquist or
gain/phase margin. This was intially resisted by many because it
required some new specific mathematical skills not present in every
control engineer.

Now that papers on IBRs stability analysis are moving toward the MIMO
stability analysis such as
\cite{ren-duan-chen-huang-wu-min,yu-lin-song-yu-yang-su,cao-ma-yang-wang-tolbert,
  samanes-urtasun-barrios-lumbreras-lopez-gubia-sanchis,liao-wang,jiang-konstantinou}
one can see that they also focus on frequency domain techniques
referring to control papers and books from before 2000. The following
sections outline some of the issues encountered in the attempts to use
frequency domain techniques for MIMO stability analysis of IBRs.

Note, like for SISO systems, if the sole objective is to check the stability
of an interconnection such as in \ref{pic0} then one can simply check
whether the location of the poles of
\begin{equation}\label{eq1aa}
  P(I+P)^{-1}=I-(I+P)^{-1}
\end{equation}
are in the open left half plane. In the context of MIMO systems, for
an $n\times n$ transfer matrix one would have to check each of the
$n^2$ transfer functions from each input  to each output but this is
still quite trivial unless dimensions really blow up. It may appear 
that the analysis can be simplified by looking at the determinant of
\begin{equation}\label{eq1a}
  \det (I+P)
\end{equation}
However, stability of the inverse of \eqref{eq1a} does not always
guarantee stability of the closed loop system. 

\begin{theorem}\label{theo1}
  The closed loop system is stable \eqref{eq1aa} if and only if
  \begin{itemize}
  \item The determinant in \eqref{eq1a} has no unstable zeros.
  \item The number of unstable poles of the determinant in
    \eqref{eq1a} should be equal to the number of unstable poles of
    $P$, counting multiplicities.
  \end{itemize}
\end{theorem}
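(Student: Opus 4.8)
The plan is to translate the matrix statement into scalar polynomial conditions via a coprime matrix fraction description, so that both ``closed-loop stability'' and the two hypotheses become statements about roots of polynomials in the closed right half plane. I would begin by fixing a right coprime polynomial factorization $P = ND^{-1}$, assuming the loop is well posed so that $\det(I+P)\not\equiv 0$. With this factorization the poles of $P$, counting multiplicities, are exactly the roots of $p(s) := \det D(s)$, so the unstable poles of $P$ are precisely the roots of $p$ in the closed right half plane.

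Next I would record the two identities that drive the proof. From $I+P = (D+N)D^{-1}$ one gets $(I+P)^{-1} = D(D+N)^{-1}$, and hence the closed-loop map of \eqref{eq1aa} is
\[
  I - (I+P)^{-1} = P(I+P)^{-1} = N(D+N)^{-1}.
\]
A short argument shows $N$ and $D+N$ are again right coprime (any common right divisor of $N$ and $D+N$ divides $D=(D+N)-N$, hence is a common right divisor of $N$ and $D$, hence unimodular), so the poles of the closed-loop system are exactly the roots of $c(s):=\det(D+N)(s)$; thus closed-loop stability is equivalent to $c$ having no root in the closed right half plane. The second identity is the classical return-difference relation
\[
  \det(I+P) = \det\big((D+N)D^{-1}\big) = \frac{\det(D+N)}{\det D} = \frac{c}{p}.
\]

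The heart of the argument is then elementary bookkeeping of the cancellations in $c/p$. Writing $g=\gcd(c,p)$, $\tilde c = c/g$ and $\tilde p = p/g$, the zeros of $\det(I+P)$ are the roots of $\tilde c$ and its poles are the roots of $\tilde p$. Letting $U(\cdot)$ count roots in the closed right half plane with multiplicity, I have $U(c)=U(g)+U(\tilde c)$ and $U(p)=U(g)+U(\tilde p)$. The first hypothesis (no unstable zeros of $\det(I+P)$) says exactly $U(\tilde c)=0$, while the second (equal number of unstable poles of $\det(I+P)$ and of $P$) reads $U(\tilde p)=U(p)$, which forces $U(g)=0$. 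Together these give $U(c)=U(g)+U(\tilde c)=0$, i.e.\ closed-loop stability; conversely, if $U(c)=0$ then, since both $g$ and $\tilde c$ divide $c$, both hypotheses follow.

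I expect the main obstacle to be the correct handling of the pole/zero cancellations and multiplicities: making precise that the common factor $g$ of $c$ and $p$ is exactly what conceals the unstable closed-loop modes that the determinant test by itself cannot detect, and justifying that $\det D$ and $\det(D+N)$ genuinely are the pole polynomials of $P$ and of the closed loop. This last point is where the coprimeness of the two factorizations is essential; once it is in place, everything reduces to the two determinant identities and root counting.
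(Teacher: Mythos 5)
Your proof is correct. Note that the paper actually states Theorem \ref{theo1} without giving any proof at all --- the only hint it offers is the remark that the multiplicity of the unstable poles of $P$ must be counted via the Smith--McMillan form --- so there is no ``paper proof'' to match, and your coprime matrix-fraction argument is the standard way to supply one. It is equivalent to the Smith--McMillan route the paper gestures at, because for a right coprime factorization $P=ND^{-1}$ the polynomial $\det D$ equals, up to a nonzero constant, the product of the Smith--McMillan denominator factors, so your $U(p)$ is exactly the paper's count of unstable poles of $P$ with multiplicity. The three steps you isolate are each sound and together complete: coprimeness of $N$ and $D+N$ (your divisor argument is the right one) guarantees that $\det(D+N)$ is genuinely the pole polynomial of the closed-loop map $P(I+P)^{-1}=N(D+N)^{-1}$ in \eqref{eq1aa}; the return-difference identity $\det(I+P)=\det(D+N)/\det D$ ties the determinant \eqref{eq1a} to those two polynomials; and the $\gcd$ bookkeeping makes precise that the second condition of the theorem is exactly the statement that no closed-right-half-plane factor of $\det D$ cancels in the quotient, i.e.\ that there are no hidden unstable modes --- which is precisely the point of the paper's $P_1$ versus $P_2$ example. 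The only assumptions worth stating explicitly (and you essentially do) are well-posedness, $\det(I+P)\not\equiv 0$, and that ``unstable'' means ``in the closed right half plane'', so that stability of \eqref{eq1aa} is the absence of closed-right-half-plane roots of $\det(D+N)$.
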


In this definition, one has to be careful to count the number of
unstable poles of $P$. This has to be done based on the Smith-McMillan
form presented in \cite{ro}. Clearly if $P$ is stable, one does not have
to worry about this. It is sometimes stated
that the system should have no hidden unstable modes (without a clear
definition). In the above, a formal definition is given by what it
means to not have hidden unstable modes: the multiplicity of the unstable poles
in $P$ should be equal to the multiplicity of the unstable poles
in the determinant of $I+P$.

The fact that determining the multiplicity is not as trivial as one
might think is illustrated via an example.

\begin{example}
  Consider the following two transfer matrices:
  \[
    P_1(s)=\begin{pmatrix}
      \frac{1}{s+1} & \frac{1}{s+2} & \frac{1}{s-1} \\[2mm]
      0             & \frac{2}{s+2} & \frac{1}{s+1} \\[2mm]
      0             & 0             & \frac{s+2}{s-1}
    \end{pmatrix},\ 
    P_2(s)=\begin{pmatrix}
            \frac{1}{s+1} & \frac{1}{s-1} & \frac{1}{s+2} \\[2mm]
      0             & \frac{2}{s+2} & \frac{1}{s+1} \\[2mm]
      0             & 0             & \frac{s+2}{s-1}
    \end{pmatrix}.
  \]
  The transfer matrices both have an unstable pole in $1$. Using the
  Smith-McMillan form it can be shown that the unstable pole $1$ of
  $P_1$ has multiplicity equal to $1$ while the unstable pole $1$ of
  $P_2$ has multiplicity equal to $2$. On the other hand:
  \[
    \det(I+P_1)=\det(I+P_2)= \frac{2s+4}{(s+1)(s+2)(s-1)}
  \]
  so for $P_1$ the multiplicity of the unstable pole in is equal to
  the multiplicity of the unstable pole of the determinant (no hidden
  mode) while for $P_2$ this is not the case (hidden mode).
\end{example}

\section{Generalized Nyquist}\label{IIIb}

For SISO systems, a single Nyquist curve or a Bode plot completely
describes the system. In contrast, for $n\times n$ MIMO systems, there
are essentially $n^2$ transfer functions and hence there are $n^2$
possible Nyquist curves or Bode plots that one could investigate. On
the other hand, it is obvious that while looking at $4$ Bode plots for
a $2\times 2$ system it is really hard to fully grasp the systems
behavior.

Therefore the $\det(I+P)$ is considered. As stated before (with some
limitations) one can assess the stability of the system based on the
determinant. However, one looses all directionality information and it
is basically impossible to deduce anything regarding the behavior of
the system by only looking at this determinant.

\begin{example}
  Investigating only $\det(I+P)$ has intrinsic limitations because it
  does not capture many aspects of a MIMO system. 
  Consider the following systems:
  \begin{alignat*}{3}
    & \begin{pmatrix} \tfrac{1}{s+1} & 0 \\ 0 &
      \tfrac{1}{s+1} \end{pmatrix},\ && \
    \begin{pmatrix} 1 & 0 \\ 0 &
      \tfrac{1}{(s+1)^2} \end{pmatrix},\\
    & \begin{pmatrix} \tfrac{1}{s+1} & \tfrac{1}{(s+2)(s+0.1)}\\ 0 &
      \tfrac{1}{s+1} \end{pmatrix},\ && \ 
    \begin{pmatrix} \tfrac{2}{s+1} & \tfrac{2s+3}{s+1} \\ \tfrac{1}{s+1} &
      \tfrac{1}{s+1} \end{pmatrix}.
  \end{alignat*}
  Even though each of these systems has an intrinsically different
  behavior, they all produce the same $\det(I+P)$. This illustrates
  that many properties (good or bad) of a system are not visible if
  one only analyzes the determinant.
\end{example}

If one looks at stability tests for MIMO systems, then checking
whether $\det(I+P)$ has unstable zeros can be evaluated using a
version of the Nyquist criterion. If the Nyquist curve of $\det(I+P)$
is obtained, then one should verify whether the number of
counterclockwise encirclements of the point $0$ equals the number of
unstable poles of $P$ (counting multiplicities). Note the fact that
this time one must check encircling $0$ not $1$. In principle correct
but one needs to check the multiplicity of unstable poles of $P$
carefully (using the Smith-McMillan form) and it is easier to simply
check the pole locations of $(I+P)^{-1}$ directly.

The reason why initially the Nyquist criterion was investigated for
MIMO systems was to extend concepts such as gain/phase margin and loop
shaping (which were extremely powerful for SISO systems) to the MIMO
domain.

If one thinks of expanding the gain margin to the MIMO domain the
logical first step is to look at $\det(I+kP)$ and check for which $k$
the system is stable. However, due to the determinant, this function
depends nonlinearly on the parameter $k$ and the gain margin can no
longer be obtained from the Nyquist curve of $\det(I+P)$. The proposed
solution was to look at the eigenvalues of $P(j\omega)$:
\[
  \lambda_1\left[ P(j\omega) \right],\ldots,
  \lambda_n\left[ P(j\omega) \right]
\]
After all, the following holds:
\[
  \det(I+P)=\prod_{i=1}^n (1+\lambda_i [P])
\]
This makes it possible to relate the encirclements of $0$ for the
determinant to encirclements of $-1$ for the eigenvalues. However, if
one wants to count encirclements of $-1$ then one needs closed
curves. There is one obvious reason and one very subtle reason why one
might not obtain closed curves.

The obvious reason is the way one numbers the eigenvalues. It is
needed that $\lambda_i \left[ P(j\omega) \right]$ are continous
functions of $\omega$ for each $i$. A simple ordering that $\lambda_1$
is the largest eigenvalue in amplitude (as done in some paper in the
power literature) does not work and can create
discontinuities. However, it is known that a continuous ordering
exists.

The second potential issue is illustrated in the following example.

\begin{example}\label{ex1}
  Consider the loop gain:
  \[
    P(s)=\begin{pmatrix} 0 & 1 \\ \tfrac{3(s-1)}{s+1} &
      -3 \end{pmatrix}
  \]
  The Nyquist plot based on $\det(I+P)$ is obtained in Fig.\ \ref{pic02}.
  \begin{figure}[t]
    \centering
    \resizebox{5cm}{!}{\includegraphics{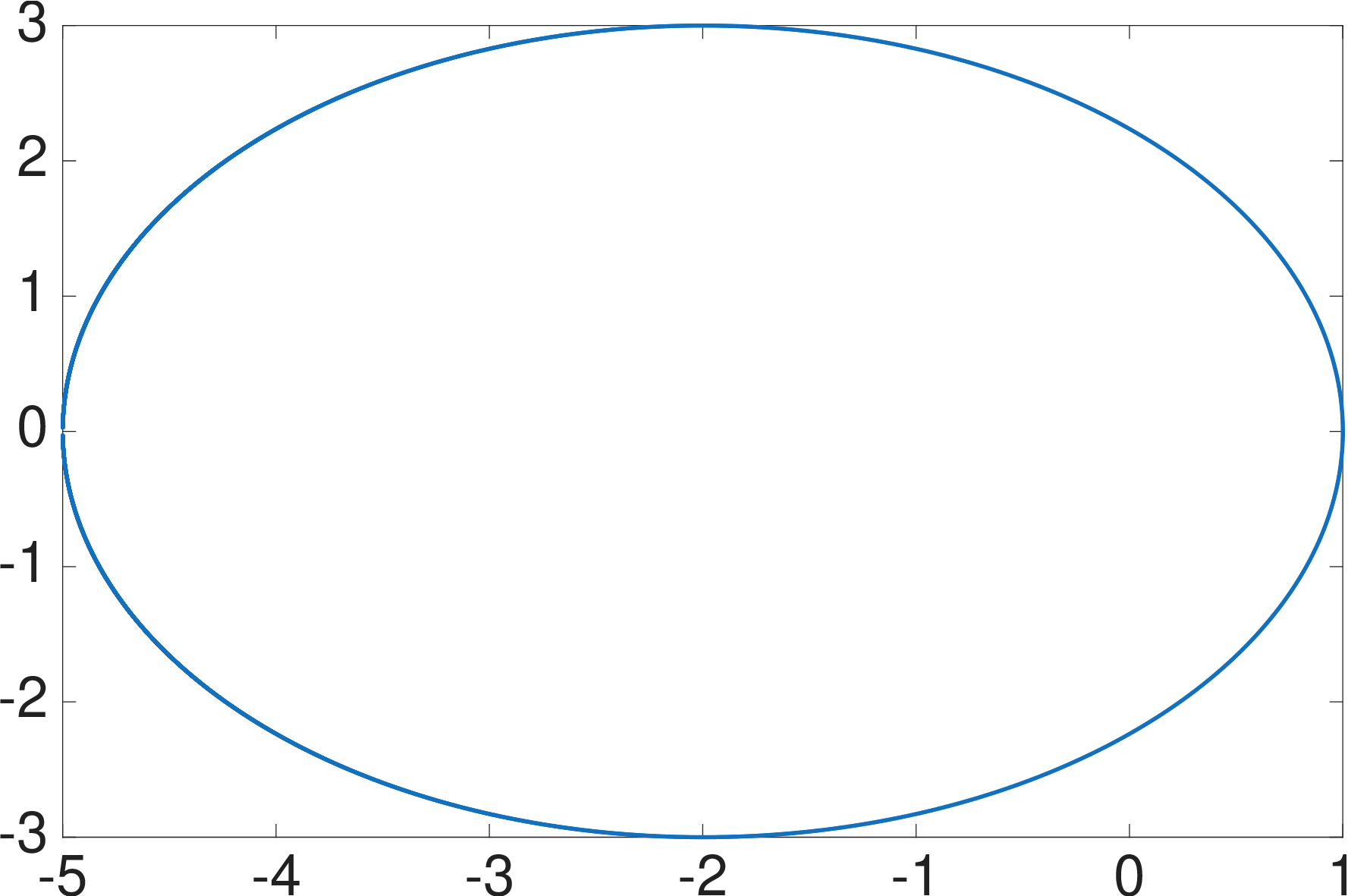}}
    \caption{Nyquist plot of $\det(I+P(j\omega))$ of Example \ref{ex1}}\label{pic02}
  \end{figure}
  
  One sees that the graph encircles $0$ while $P$ has no unstable
  poles and hence one concludes that the interconnection is
  unstable. This is of course trivially seen from the fact that:
  \[
    (I+P)^{-1} = \tfrac{s+1}{5s-1} \begin{pmatrix} 2 & 1 \\
      \tfrac{3(s-1)}{s+1} & -2 \end{pmatrix}
  \]
  which has an unstable pole in $\tfrac{1}{5}$. However, consider the 
  two plots of the two eigenvalues of the open loop gain in Fig.\ \ref{pic03}.
  \begin{figure}[t]
    \centering
    \resizebox{3.4cm}{!}{\includegraphics{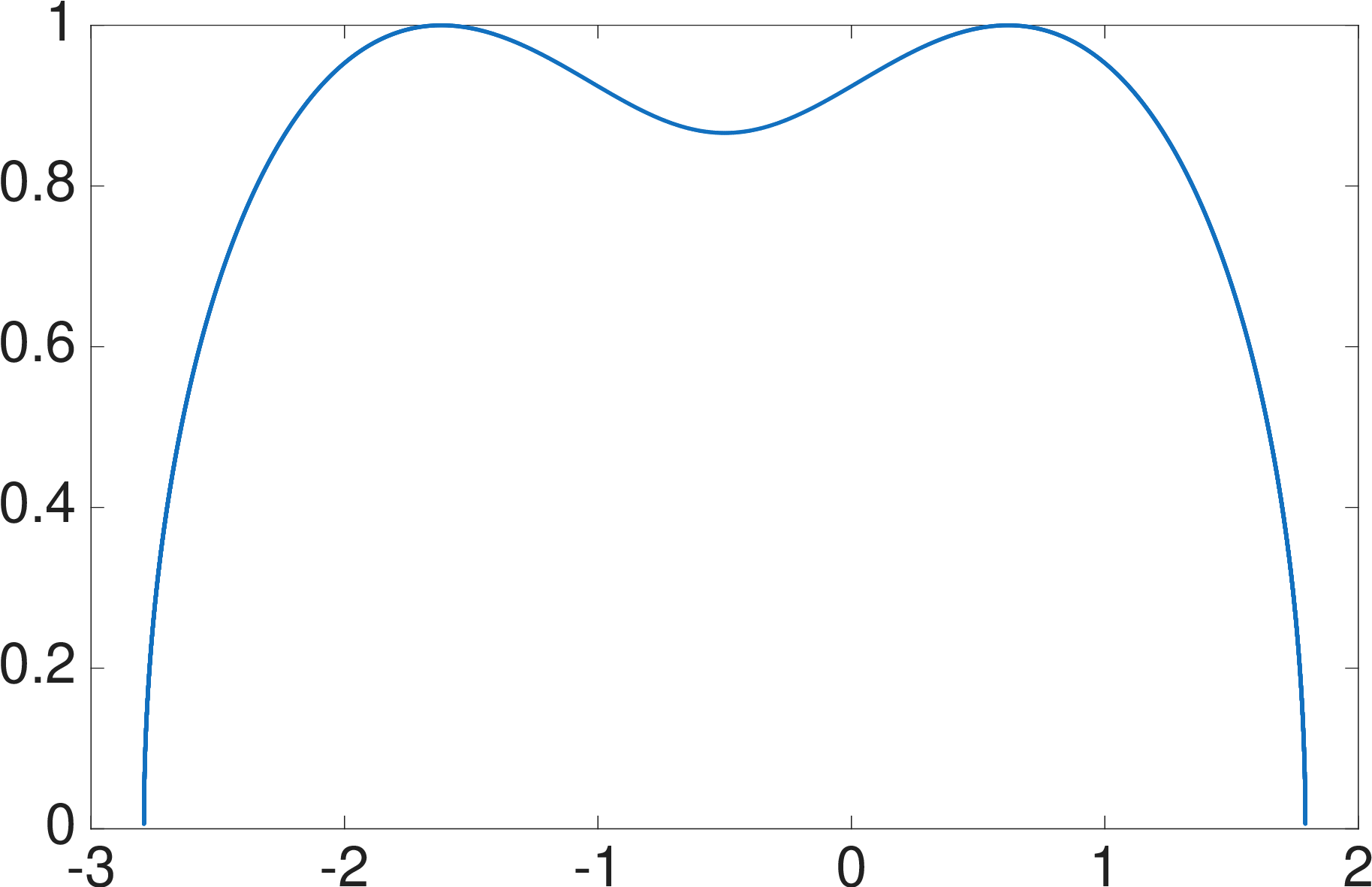}}
    \resizebox{3.4cm}{!}{\includegraphics{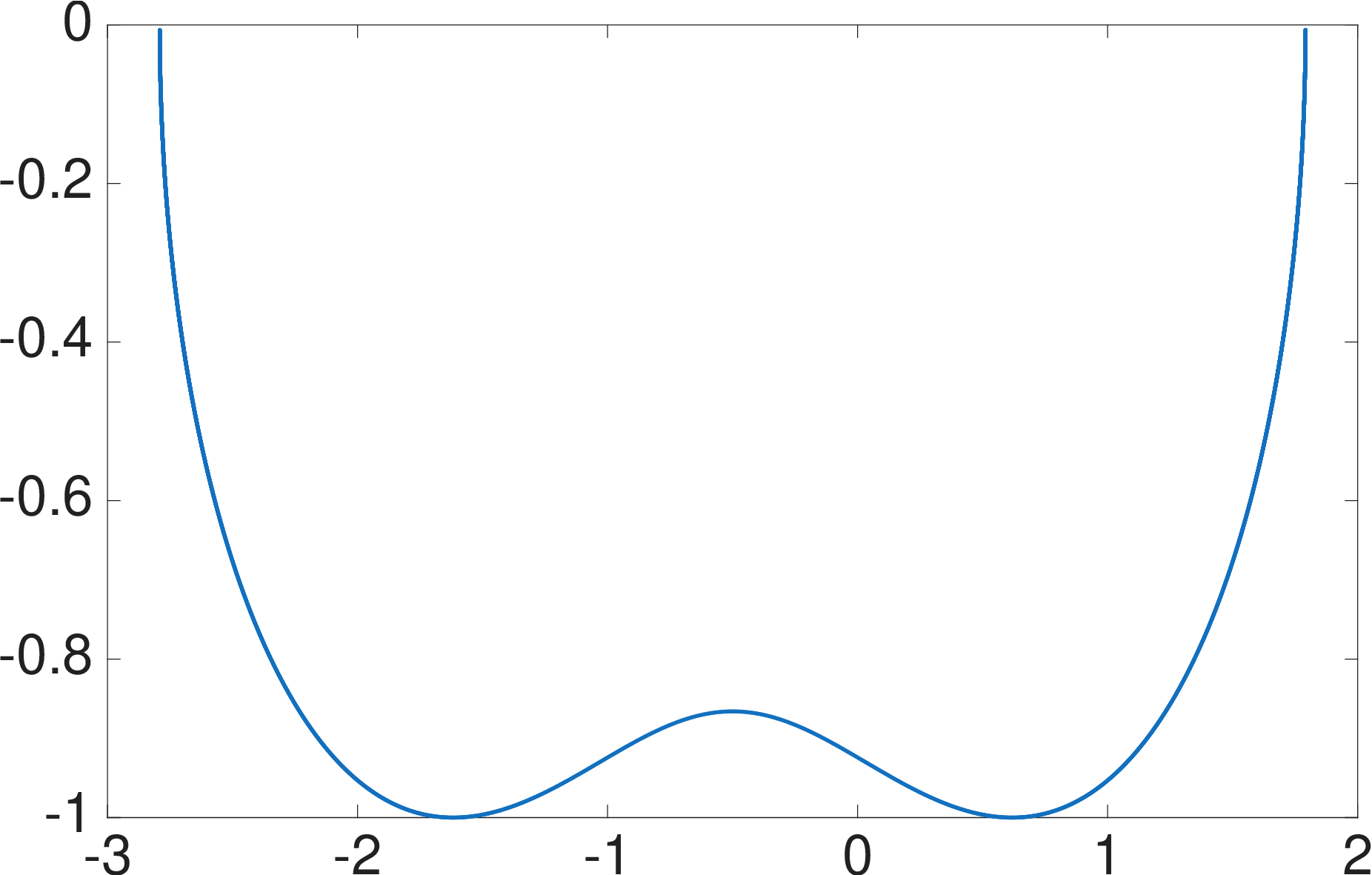}}
    \caption{Eigenvalue trajectories of $P(j\omega)$ of Example
      \ref{ex1} as the frequency varies from $-\infty$ to $\infty$}\label{pic03}
  \end{figure}

  Clearly, because these graphs are not closed one cannot even properly
  measure the number of encirclements.
\end{example}

This phenomenon was originally discovered by \cite{desoer-wang2} and
it was noted that the issue could be handled by pairwise combining
these $n$ eigenvalue graphs to create $q$ closed curves (with
$q\leq n$). Then, one should count the sum of the number of encirclements
in these $q$ graphs, and this should be equal to the number of
open-loop poles.

\begin{example}
  To continue example \ref{ex1}, two plots for the
  eigenvalues are obtained, neither of which forms a closed curve. Following
  \cite{desoer-wang2}, these curves are combined to obtain Fig.\ \ref{pic04}.
  \begin{figure}[t]
    \centering
    \resizebox{5cm}{!}{\includegraphics{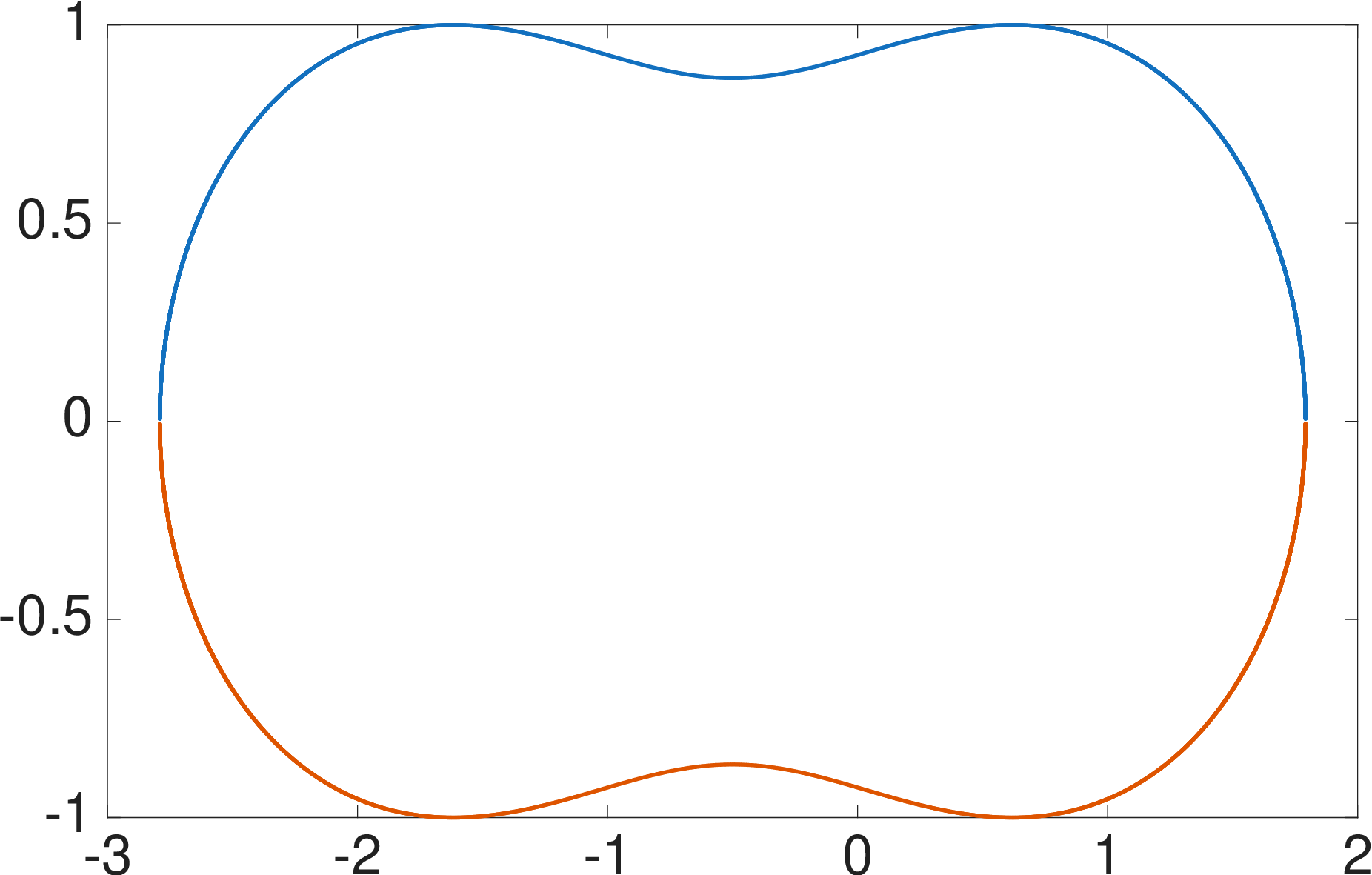}}
    \caption{Plot of combined trajectories from Fig.\ref{pic03}}\label{pic04}
  \end{figure}

  which encircles the point $-1$ once. Because the open‑loop system
  has no unstable poles, this implies that the closed-loop system is
  unstable. 
\end{example}

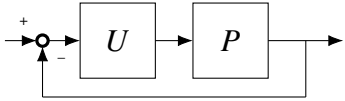
\begin{figure}[t]
  \centering
  \begin{tikzpicture}[every node/.style={outer sep=0pt,thick},
    decorate, scale=0.25]
    \draw (4,6) rectangle (8,10) node[pos=.5] {\large $U$};
    \draw (10,6) rectangle (14,10) node[pos=.5] {\large $P$};
    \draw[very thick] (2,8) circle [radius=0.3];
    \node at (3,7)  {\tiny $-$};
    \node at (1,9)  {\tiny $+$};    
    \draw[-{Latex[length=2mm]}]  (8,8) -- (10,8);
    \draw[-{Latex[length=2mm]}]  (0,8) -- (1.7,8);
    \draw[-{Latex[length=2mm]}]  (2.3,8) -- (4,8);
    \draw[-{Latex[length=2mm]}]  (14,8) -- (18,8);
    \draw  (16,8) -- (16,5);
    \draw  (2,5) -- (16,5);
    \draw[-{Latex[length=2mm]}]  (2,5) -- (2,7.7);
  \end{tikzpicture}
  \caption{Standard feedback loop with uncertainty block U}\label{pic1a}
\end{figure}
Gain and phase margin can be defined based on the feedback loop given
in Fig.\ \ref{pic1a}. For SISO systems, if one chooses $U=1$ then the
interconnection of Fig.\ \ref{pic0} is obtained. If this system is
stable then the question can be posed how sensitive this conclusion is
when small perturbations are introduced.  A logical step is to set
$U=k$ and check for which values of $k$ the closed loop system remains
stable. From the Nyquist curve one can easily obtain an interval
$[k_1,k_2]$ for $k$ with $k_1<1<k_2$ where stability is preserved. If
$k_1$ and $k_2$ are ``far'' away from $1$ then one can conclude that
stability is preserved given variations in the feedback gain. If one
chooses $U=e^{i\theta}$ then from the Nyquist curve one can easily
obtain an interval $[-\theta_1,\theta_1]$ for $\theta$ with $\theta_1>0$
where stability is preserved. If $\theta_1$ is ``far'' away from $0$
then one can conclude that stability is preserved given small delays
in the feedback gain (recall that delays yield a $e^{\tau s}$ term in
the frequency domain).

The question is whether the above can be used to obtain a form of
gain- and phase-margin for MIMO systems. Note that in Fig.
\ref{pic1a} the $U$ is no longer scalar but matrix valued. For $U=I$
closed loop interconnection of Fig.\ \ref{pic0} can be obtained.  Then the
gain margin can be defined as the largest interval $(k_1,k_2)$ with
$k_1<1<k_2$ such that the closed loop system is stable for all $U=kI$
with $k\in (k_1,k_2)$. Similarly, the phase margin is defined as the
largest $\theta_1>0$ such that the closed loop system is stable for
all $U=e^{j\theta} I$ with $\theta\in (-\theta_1,\theta_1)$.  This is
a direct extension of the SISO case. Since
\[
  kI=\begin{pmatrix} k & 0 & \cdots & 0 \\ 0 & k & \ddots & \vdots \\
    \vdots & \ddots & \ddots & 0 \\
    0 & \cdots & 0 & k \end{pmatrix}
\]
(and similarly for the phase margin) it can be observed that the
analysis assumes all channels have the same gain variation $k$. That
is why this is sometimes called uniform uncertainty.

The following result can be obtained:

\begin{theorem}[uniform uncertainty] \label{theo2} Assume the nominal
  closed loop system is stable ($U=I$). Consider the $n$ Nyquist plots
  obtained from the $n$ eigenvalues $\lambda_i\left[ P \right]$ for
  $i=1,\ldots, n$. Pairwise merge the Nyquist plots of eigenvalues
  which did not result in a closed curve to obtain $q$ closed Nyquist
  plots. For each of these individual plots compute $k_{1,i}$,
  $k_{2,i}$ and $\theta_{1,i}$ to obtain a gain-margin
  $[k_{1,i}, k_{2,i}]$ and a phase-margin $\theta_{1,i}$ based on the
  $i$th plot. In that case the gain- and phase margin for the MIMO
  system is given by:
  \[
    k_1 =\max_{i=\{1,\ldots, q\}} k_{1,i},\qquad
    k_2 =\min_{i=\{1,\ldots, q\}} k_{2,i},
  \]
  \[
    \theta_1 =\min_{i=\{1,\ldots, q\}} \theta_{1,i}
  \]
  In other words, the closed-loop system in Fig.\ \ref{pic1a} is
  asymptotically stable for any $U=kI$ with $k$ satisfying:
  \[
    k_1<k<k_2
  \]
  This gives the gain margin. Similarly, the closed-loop system is
  asymptotically stable for any $\theta$ satisfying
  \[
    -\theta_1 < \theta < \theta_1
  \]
  with $U=e^{j\theta}I$
\end{theorem}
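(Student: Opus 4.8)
The plan is to convert the matrix-valued uncertainty $U=kI$ (and $U=e^{j\theta}I$) into a scalar encirclement count through the determinant factorization, and then to exploit the fact that a scalar multiple of the identity commutes with $P$. First I would reduce stability of the loop in Fig.\ \ref{pic1a} with $U=kI$ to an encirclement condition. The loop gain is $kP$, so by Theorem \ref{theo1} applied to $kP$ the closed loop is stable if and only if $\det(I+kP)$ has no unstable zeros and its number of unstable poles equals that of $kP$. Using
\[
  \det(I+kP)=\prod_{i=1}^n\bigl(1+k\lambda_i[P]\bigr)=k^n\prod_{i=1}^n\Bigl(\lambda_i[P]+\tfrac{1}{k}\Bigr),
\]
this is equivalent to requiring that the characteristic loci $\lambda_i[P(j\omega)]$ encircle the point $-1/k$ a net number of times equal to the number $p$ of unstable poles of $P$, counted via the Smith--McMillan form and with the usual contour indentations if $P$ has imaginary-axis poles, exactly as cautioned after Theorem \ref{theo1}. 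Since the individual loci need not close, I would work with the $q$ merged closed curves from the statement, whose encirclement counts sum to the total.

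The key observation is that $kI$ commutes with $P$, so $kP$ has the same eigenvectors as $P$ and eigenvalues $k\lambda_i[P]$. Consequently the branch structure deciding which loci merge into which of the $q$ closed curves is independent of $k$: varying $k$ merely scales each closed curve radially. Counting how often the scaled curves encircle the fixed point $-1$ is therefore the same as counting how often the fixed curves encircle $-1/k$. On each closed curve $i$, this encirclement number is constant as long as $-1/k$ stays in the connected component of the real axis, minus the real crossings of curve $i$, that contains the nominal point $-1$; the endpoints of that component, read off from the $i$th Nyquist plot precisely as in the SISO case, are $k_{1,i}$ and $k_{2,i}$.

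Now I would combine the curves. Since the nominal system ($k=1$) is stable, the total encirclement count of $-1$ equals $p$; writing $N_i$ for the constant contribution of curve $i$, one has $\sum_{i=1}^q N_i=p$. For $k\in\bigcap_{i=1}^q(k_{1,i},k_{2,i})=(\max_i k_{1,i},\,\min_i k_{2,i})=(k_1,k_2)$, the point $-1/k$ crosses none of the $q$ curves, so every contribution remains $N_i$ and the total remains $p$; by the first step the closed loop is stable for all such $k$, giving the gain margin. The phase margin is identical, since $e^{j\theta}I$ also commutes with $P$. Here
\[
  \det(I+e^{j\theta}P)=\prod_{i=1}^n\bigl(1+e^{j\theta}\lambda_i[P]\bigr)=e^{jn\theta}\prod_{i=1}^n\bigl(\lambda_i[P]+e^{-j\theta}\bigr),
\]
so the critical point is $-e^{-j\theta}$, which traces the unit circle as $\theta$ varies from $0$. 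On each curve $i$ the count is preserved while $-e^{-j\theta}$ stays on the arc through $-1$ bounded by that curve's intersections with the unit circle, i.e.\ for $|\theta|<\theta_{1,i}$; intersecting over $i$ yields stability for $|\theta|<\min_i\theta_{1,i}=\theta_1$.

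I expect the main obstacle to be justifying rigorously that the merged-curve branch structure is independent of $k$ and $\theta$. This is exactly where uniform, scalar uncertainty is essential: because $kI$ and $e^{j\theta}I$ commute with $P$, they leave the eigenvectors, and hence the connectivity of the characteristic loci, unchanged, so the $q$ closed curves persist and only deform by a uniform scaling or rotation. A secondary technical point is the correct evaluation of $p$ through the Smith--McMillan form together with the treatment of any imaginary-axis poles of $P$ by indenting the Nyquist contour, again as emphasized after Theorem \ref{theo1}.
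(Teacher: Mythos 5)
Your proof is correct, and it is essentially the argument the paper intends: the paper states Theorem \ref{theo2} without a separate proof, presenting it as a direct consequence of the Desoer--Wang generalized Nyquist criterion it has just described (the merged characteristic loci, counted together, must encircle $-1$ exactly as many times as $P$ has unstable poles in the Smith--McMillan sense), combined with the observation that $U=kI$ or $U=e^{j\theta}I$ merely rescales or rotates those loci, i.e.\ moves the critical point to $-1/k$ or $-e^{-j\theta}$. Your write-up fills in precisely this reasoning, including the two points the paper itself flags as delicate: the grouping of non-closed eigenvalue branches into $q$ closed curves (which, as you note, is unaffected by scalar scaling or rotation) and the correct unstable-pole count via the Smith--McMillan form.
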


However, using generalized Nyquist to obtain gain and phase margin
only verifies the effect of uniform changes in all channels. In most
cases, it makes no sense to only look at uniform changes in each
channel.  However, as noted below the Nyquist curve of the eigenvalues
misses essential information about the system.

To first natural step for a more reasonanle definition of gain and
phase margins in the MIMO case, is not to look at uniform disturbances
but look at potentially different perturbations in each channel:
\begin{equation}\label{nonuni}
  U = \begin{pmatrix}
    k_1    & 0      & \cdots & 0 \\
    0      & k_2    & \ddots & \vdots \\
    \vdots & \ddots & \ddots & 0 \\
    0      & \cdots & 0      & k_n
  \end{pmatrix}
\end{equation}
Here $k_i\in \R$ if one is looking for gain margin while
$k_i=e^{i\theta_i}$ if one is looking for phase margin.

However, it should first be noted that having satisfactory gain and
phase margins under uniform perturbations does not guarantee
robustness with respect to perturbations of the form \eqref{nonuni}. 

\begin{example}\label{ex3}
  Consider
  \[
    P(s)=\begin{pmatrix}
      \tfrac{1}{s+1}+\frac{b}{s+2} & \frac{b}{s+2}\\[2mm]
      -\frac{b}{s+2} & \tfrac{1}{s+1}-\frac{b}{s+2}
    \end{pmatrix}
  \]
  The gain margin can be computed and the closed-loop system is stable
  for $U=kI$ for any positive constant. Moreover, the closed-loop
  system is stable for $U=e^{j\theta} I$ provided
  $|\theta| < \tfrac{3\pi}{4}$. The parameter $b$ and the associated
  pole $s=-2$ does not effect the Nyquist curve.

  This system has a nice gain and phase margin considering uniform
  uncertainty. However, nonuniform perturbations strongly effects the
  behavior of the system. For instance, the closed-loop system becomes
  unstable if
  \[
    U=\begin{pmatrix} \tfrac{b-4}{b+4} & 0 \\[2mm] 0 & 1 \end{pmatrix}.
  \]
  which is clearly a very small deviation from $I$ for $b$ large. This
  example illustrates that the gain and phase margin computed through
  generalized Nyquist inherently only give information about
  identical perturbations in all channels and gives no insight in the
  effect of perturbations which are not identical for all channels.
\end{example}

Whether satisfactory gain and phase margins exist for nonuniform
perturbations can inherently not be checked using the generalized
Nyquist criterion. Another example illustrates that the issue is even
more involved since even if satisfactory phase margins exist for
nonuniform perturbations of the form \eqref{nonuni} (note that there
is no efficient tool to effectively compute gain and phase margins for
nonuniform perturbations) then one still cannot be sure that the system
is robust since nondiagonal perturbations might have dramatic effect
as illustrated by the following example.
\begin{figure}[t]
  \centering
  \begin{tikzpicture}[every node/.style={outer sep=0pt,thick},
    decorate, scale=0.25]
    \draw (4,6) rectangle (8,10) node[pos=.5] {\large $G_1$};
    \draw (10,6) rectangle (14,10) node[pos=.5] {\large $G_2$};
    \draw[very thick] (2,8) circle [radius=0.3];
    \node at (3,7)  {\tiny $-$};
    \node at (1,9)  {\tiny $+$};    
    \draw[-{Latex[length=2mm]}]  (8,8) -- (10,8);
    \draw[-{Latex[length=2mm]}]  (0,8) -- (1.7,8);
    \draw[-{Latex[length=2mm]}]  (2.3,8) -- (4,8);
    \draw[-{Latex[length=2mm]}]  (14,8) -- (18,8);
    \draw  (16,8) -- (16,5);
    \draw  (2,5) -- (16,5);
    \draw[-{Latex[length=2mm]}]  (2,5) -- (2,7.7);
  \end{tikzpicture}
  \caption{Standard interconnection of two stable systems $G_1$ and $G_2$ in
    small  gain theorem}\label{pic1a2} 
\end{figure}
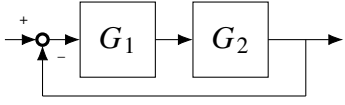

\begin{example}\label{ex2}
  Consider
  \[
    P(s)=\begin{pmatrix}
      \tfrac{1}{s+1} & \tfrac{b}{s+1} \\[2mm]
      0 & \tfrac{1}{s+1}
    \end{pmatrix}
  \]
  The gain margin can be computed and the closed-loop
  system is stable for
  \[
    K=\begin{pmatrix} k_1 & 0 \\ 0 & k_2 \end{pmatrix}
  \]
  for any positive constants $k_1,k_2$. Moreover, the closed-loop
  system is stable for
  \[
    K=\begin{pmatrix} e^{j\theta_1} & 0 \\ 0 &
      e^{j\theta_2} \end{pmatrix}
  \]
  provided $\theta_1,\theta_2 < \tfrac{3\pi}{4}$. However the $1,2$
  element of this matrix does not affect the Nyquist curves of the two
  eigenvalues in any way. On the other hand, this off-diagonal element
  strongly effects the behavior of the system. For instance, the
  closed-loop system becomes unstable if
  \[
    U=\begin{pmatrix} 1 & 0 \\ -\tfrac{5}{b} & 1 \end{pmatrix}.
  \]
  which is clearly a very small perturbation for $b$ large. This
  example, from \cite{lehtomaki-sandell-athans}, illustrates that it
  is dangerous to consider gain and phase margin as a good robustness
  measure for MIMO systems:
\end{example}

Note that some papers use tools such as Gershgorin circles to find
estimates for the eigenvalues and thus obtain simpler but often very
conservative methods to assess stability. This raises two
questions.
\begin{itemize}
\item Why introduce conservative tools to check stability if 
very efficient tools exist which are not conservative (actually
necessary and sufficient). 
\item These tools still only investigate the eigenvalues of $P$ and
  hence they miss intrinsic information about the system as is
  outlined in the above.
\end{itemize}

In conclusion, generalized Nyquist does not provide any insight on the
robustness of the closed loop stability. It intrinsically misses
information about the system to properly assess the behavior of the
closed loop system. If one is careful one can properly evaluate whether
the system is stable using generalized Nyquist. On the other hand, if
one wants to only check stability then one can much more efficiently
check the poles of the closed-loop system directly. There is no additional
insight obtained from generalized Nyquist and hence it has not been
used in the control area for the past 30 years.

\section{Small-gain theorem}\label{IIIc}

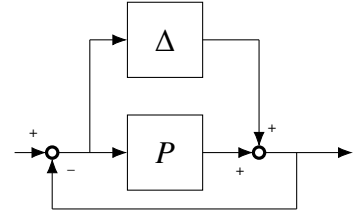
\begin{figure}[t]
  \centering
  \begin{tikzpicture}[every node/.style={outer sep=0pt,thick},
    decorate, scale=0.25]
    \draw (6,12) rectangle (10,16) node[pos=.5] {\large $\Delta$};
    \draw (6,6) rectangle (10,10) node[pos=.5] {\large $P$};
    \draw[very thick] (2,8) circle [radius=0.3];
    \node at (3,7)  {\tiny $-$};
    \node at (1,9)  {\tiny $+$};    
    \draw[very thick] (13,8) circle [radius=0.3];
    \node at (12,7)  {\tiny $+$};
    \node at (13.7,9.3)  {\tiny $+$};    
    \draw[-{Latex[length=2mm]}]  (4,14) -- (6,14);
    \draw[-{Latex[length=2mm]}]  (0,8) -- (1.7,8);
    \draw[-{Latex[length=2mm]}]  (2.3,8) -- (6,8);
    \draw[-{Latex[length=2mm]}]  (10,8) -- (12.7,8);
    \draw[-{Latex[length=2mm]}]  (13.3,8) -- (18,8);
    \draw[-{Latex[length=2mm]}]  (13,14) -- (13,8.3);
    \draw  (4,8) -- (4,14);
    \draw  (10,14) -- (13,14);
    \draw  (15,8) -- (15,5);
    \draw  (2,5) -- (15,5);
    \draw[-{Latex[length=2mm]}]  (2,5) -- (2,7.7);
  \end{tikzpicture}
  \caption{Additive uncertainty}\label{pic1b1}
\end{figure}
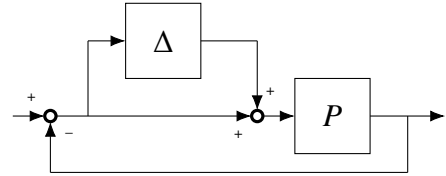
\begin{figure}[t]
  \centering
  \begin{tikzpicture}[every node/.style={outer sep=0pt,thick},
    decorate, scale=0.25]
    \draw (6,10) rectangle (10,14) node[pos=.5] {\large $\Delta$};
    \draw (15,6) rectangle (19,10) node[pos=.5] {\large $P$};
    \draw[very thick] (2,8) circle [radius=0.3];
    \node at (3,7)  {\tiny $-$};
    \node at (1,9)  {\tiny $+$};    
    \draw[very thick] (13,8) circle [radius=0.3];
    \node at (12,7)  {\tiny $+$};
    \node at (13.7,9.3)  {\tiny $+$};    
    \draw[-{Latex[length=2mm]}]  (4,12) -- (6,12);
    \draw[-{Latex[length=2mm]}]  (0,8) -- (1.7,8);
    \draw[-{Latex[length=2mm]}]  (2.3,8) -- (12.7,8);
    \draw[-{Latex[length=2mm]}]  (13.3,8) -- (15,8);
    \draw[-{Latex[length=2mm]}]  (19,8) -- (23,8);
    \draw[-{Latex[length=2mm]}]  (13,12) -- (13,8.3);
    \draw  (4,8) -- (4,12);
    \draw  (10,12) -- (13,12);
    \draw  (21,8) -- (21,5);
    \draw  (2,5) -- (21,5);
    \draw[-{Latex[length=2mm]}]  (2,5) -- (2,7.7);
  \end{tikzpicture}
  \caption{Multiplicative uncertainty}\label{pic1b2}
\end{figure}
As noted in the previous section, generalized Nyquist does not give
insights in robustness of the closed loop stability. Therefore,
different tools are needed. As noted in the literature the use of
eigenvalues is inherently insuitable in this context. For instance,
the classical reference for nonuniform perturbations as discussed in
the previous section is the paper \cite{lehtomaki-sandell-athans}
which uses singular values instead of eigenvalues. The basic tool for
many results in this context is the small gain theorem.

The $H_\infty$ norm is defined as follows:
\[
  \| P \|_\infty = \sup_{\omega} \| P(j\omega) \|.
\]
for a stable transfer function $P$ where $\| A \|$ for a matrix $A$ is
equal to the largest singular value.  An important property of the
$H_\infty$ norm is that
\[
 \| P_1P_2 \|_\infty \leq \| P_1 \|_\infty \| P_2 \|_\infty
\]
i.e. the norm is submultiplicative. The small gain theorem basically
states that the interconnection in Fig.\ \ref{pic1a2} is stable for
two stable systems $G_1$ and $G_2$ if $\| G_1 G_2 \|_\infty <1$.

The two classical models for uncertainty are additive and
multiplicative uncertainty as presented in Fig.\ \ref{pic1b1} and
\ref{pic1b2} respectively. They can be viewed as measuring absolute
and relative errors in the model.  The uncertainty structure presented
in the previous section basically sets $\Delta=U-I$ and using the
multiplicative model of Fig.\ \ref{pic1b2}. The small gain theorem basically
yields the following:

\begin{theorem}
  Assume that $P$ is stable. Either one of the interconnections in
  Fig.\ \ref{pic1b1} and \ref{pic1b2} is stable for $\Delta=0$ (nominal stability) if
  and only if $(I+P)^{-1}$ is asymptotically stable

  Given nominal stability, the intersection of Fig.
  \ref{pic1b1} (additive uncertainty) is stable for all
  $\Delta$ such that
  \[
    \| \Delta \|_\infty < \| (I+P)^{-1} \|^{-1}_\infty
  \]
  Given nominal stability, the intersection of Fig.\ \ref{pic1b2}
  (multiplicative uncertainty) is stable for all $\Delta$ such that
  \[
    \| \Delta \|_\infty < \| P(I+P)^{-1} \|^{-1}_\infty
  \]
\end{theorem}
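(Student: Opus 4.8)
The plan is to reduce all three assertions to the standard $M$--$\Delta$ feedback structure of Fig.~\ref{pic1a2} and then invoke the small gain theorem quoted above, using Theorem~\ref{theo1} only to dispatch the nominal case. The unifying idea is that, after pulling the perturbation block $\Delta$ out of the loop, the perturbed interconnection is internally stable precisely when the feedback loop consisting of $\Delta$ and the transfer matrix $M$ that the nominal (stable) closed loop presents to $\Delta$ is stable; the small gain theorem then supplies a sufficient condition in terms of $\|\Delta\|_\infty$ and $\|M\|_\infty$.

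For the nominal claim I would argue as follows. Setting $\Delta=0$ in either Fig.~\ref{pic1b1} or Fig.~\ref{pic1b2} collapses the diagram to the unity feedback loop of Fig.~\ref{pic0} with forward block $P$. Since $P$ is assumed stable it has no unstable poles, so the second condition of Theorem~\ref{theo1} is vacuously satisfied and the first reduces to requiring that $\det(I+P)$ have no unstable zeros. Because $P$ is stable, the poles of $(I+P)^{-1}$ occur exactly at the unstable zeros of $\det(I+P)$, so the closed loop is stable if and only if $(I+P)^{-1}$ is asymptotically stable. This settles the first assertion for both uncertainty models at once.

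For the two robustness bounds I would set the reference to zero and trace signals to compute $M$, the map from the output of $\Delta$ back to its input. In the additive model the error $e$ drives $P$ and $\Delta$ in parallel and their outputs are summed, giving $(I+P)e=-w$ with $w$ the output of $\Delta$, hence $M=-(I+P)^{-1}$. In the multiplicative model the output $w$ of $\Delta$ is injected additively at the plant input, giving $(I+P)e=-Pw$, hence $M=-(I+P)^{-1}P$; here I would record the identity $(I+P)^{-1}P=P(I+P)^{-1}$, which follows from $P(I+P)=(I+P)P$, so that $\|M\|_\infty=\|P(I+P)^{-1}\|_\infty$. In each case $M$ is stable by the nominal stability just established and $\Delta$ is stable by hypothesis, so applying the small gain theorem to the $M$--$\Delta$ loop together with submultiplicativity of $\|\cdot\|_\infty$ gives stability whenever $\|\Delta\|_\infty\|M\|_\infty<1$, which is exactly $\|\Delta\|_\infty<\|(I+P)^{-1}\|_\infty^{-1}$ and $\|\Delta\|_\infty<\|P(I+P)^{-1}\|_\infty^{-1}$ respectively.

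The step I expect to be the main obstacle is justifying that stability of the extracted $M$--$\Delta$ loop is genuinely equivalent to internal stability of the full perturbed interconnection, rather than merely stability of a single transfer function. This requires checking well-posedness and verifying that, once the nominal part is stable, no unstable pole--zero cancellation is hidden when $\Delta$ is reconnected, i.e.\ the standard internal-stability bookkeeping of robust control. A secondary point requiring care is matching the sign and summing-junction conventions of Figs.~\ref{pic1b1} and~\ref{pic1b2} to the convention of Fig.~\ref{pic1a2}: the small gain bound is insensitive to the sign of $M$, but the internal-stability argument must use the correct signal flow.
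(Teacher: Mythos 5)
Your proposal is correct and takes essentially the same route as the paper's own proof: both arguments pull $\Delta$ out of the loop, recognize that it sees $(I+P)^{-1}$ in the additive case and $P(I+P)^{-1}$ in the multiplicative case, recast the diagram as the two-block interconnection of Fig.~\ref{pic1a2}, and conclude via the small gain theorem combined with submultiplicativity of the $H_\infty$ norm. Your write-up is in fact slightly more complete than the paper's, which leaves the nominal-stability equivalence, the commutation identity $(I+P)^{-1}P=P(I+P)^{-1}$, and the sign/internal-stability bookkeeping implicit.
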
 

\begin{proof}
  The interconnection in Fig.\ \ref{pic1b1} can be rewritten in the
  form of Fig.\ \ref{pic1a2} by choosing $G_1=\Delta$ and $G_2=(I+P)^{-1}$.
  The small gain theorem then guarantees the interconnection is stable
  provided $\| G_1 G_2\|_{\infty}<1$. Since
  \begin{equation}\label{submulti}
    \| G_1 G_2 \|_{\infty}  \leq \| G_1 \|_{\infty} \| G_2
    \|_{\infty}
  \end{equation}
  The results for multiplicative uncertainty can be obtained using
  similar arguments using that the interconnection in Fig.\ \ref{pic1b2} can be rewritten in the
  form of Fig.\ \ref{pic1a2} by choosing $G_1=\Delta$ and
  $G_2=P(I+P)^{-1}$.
\end{proof}

For instance, if one considers the intersection in Fig.\ref{pic1a} in the
previous section with $U$ given by \eqref{nonuni} then \cite{lehtomaki-sandell-athans}
obtained estimates for the nonuniform gain and phasse margin based on the above
result with $\Delta=U-I$.

In the literature, different norms have been used. For instance, in
the IBRs stability analysis literature the $G$-norm is defined in \cite{belkhayat} which
does not satisfy the crucial submultiplicative property of
\eqref{submulti} and hence those results do not follow from a standard
application of the small gain theorem (which requires norm that are
submultiplicative). These results using the $G$-norm are actually
quite conservative.

However, besides the two models (multiplicative/additive uncertainty)
there is much more that can incorporated into the analysis:
\begin{itemize}
\item Sometimes, one has frequency information about the uncertainty such
  as:
  \[
    \| \Delta(j\omega) \| \leq f(j\omega)
  \]
  for $\omega\in \R$ with $f$ a known function can be incorporated into
  the analysis. This can, for instance, be used to incorporate
  information that there is more model uncertainty at high frequency.
\item There might be uncertain parameters in the model and the
  objective is to guarantee that the system remains stable for all
  parameter values in a certain range.
\item There might be inaccuracies in the model because of ignoring
  certain dynamics in the modeling. For instance, one might have a
  simple model $y_1=P_1u_1$ and $y_2=P_2u_2$ but might have ignored
  small interactions in the model meaning that $u_1$ does have
  some effect on $y_2$. These kinds of structured knowledge about the
  model uncertainty can be incorporated in the analysis.
\end{itemize}

Robust control can analyse the stability subject to all kind of model
uncertainty. The important issue is to model the uncertainty in the
sense that you incorporate into the analysis as much prior information
as one possibly can. Although much more could be discussed on this
topic, further details can be found in
\cite{chen-3,zhou-doyle-glover}.

Finally, it should be noted that in the context of IBRs stability
analysis, $P=YZ$ with admittance $Y$ and a grid modelled by an
infinite bus with impedance $Z$. In that case, one clearly can rewrite
Fig.\ \ref{pic0} into Fig.\ \ref{pic1a2} with $G_1=Z$ and $G_2=Y$.  Then,
in order to test for stability, one could use the small gain theorem to
conclude that the system is stable provided $\| YZ
\|_{\infty}<1$. This is a very conservative result but has been used
in the IBRs stability analysis literature with different types of
norms. Note that in the beginning of this section $G_1=\Delta$ was
used which is unknown. In the current context, both $G_1$ and $G_2$ are
known and one can just check the stability of the closed loop system
directly without invoking the small gain theorem. The small gain
theorem is conservative if you check for one specific $G_1$. But it
becomes necessary and sufficient if you want the check for stability
for all possible $G_1$ with norm less than $1$.

\section{Passivity}\label{IIId}

An intrinsically different approach to analyse the stability of closed
loop systems is to use the concept of passivity. This has also been
studied in the IBRs stability analysis literature, see \cite{
agbemuko-domínguez-garcía-gomis-bellmunt-harnefors,
harnefors-yepes-vidal-doval-gandoy,
zhao-wang-zhu,
harnefors-zhang-bongiorno}.

Passivity is based on the concept of energy. It is very easy to
conclude that a standard RLC circuit will always be stable because
this system does not contain any energy sources. Therefore, it is easy
to conclude that signals remain bounded unless external energy is
supplied and, because of resistors, energy dissipates. Therefore,
signals will converge to zero. This concept is not limited to linear
systems and these arguments equally apply to nonlinear circuits where
one, for instance, adds a diode or another nonlinear element. Also
passivity is used extensively in the context of robotics and other
mechanical structures were similar arguments can be applied.

First a formal definition and introduction to this concept is provided
which is based on
\cite{kottenstette-mccourt-xia-gupta-antsaklis,schaft}. Note that
these definitions apply equally well to SISO and MIMO systems.

\begin{definition}\label{def1}
  A (possibly nonlinear) system of the form:
  \[
    \begin{system*}{ccl}
      \dot{x}(t) &=& f(x(t),u(t)),\qquad x(0)=x_0 \\
      y(t) &=& h(x(t),u(t))
    \end{system*}
  \]
  where $f,h$ are locally Lipschitz with $f(0,0)=0, g(0,0)=0$ is
  called passive if there exists a function $\beta$ such that
  
  \[
    \int_0^T y\T(t)u(t)\, \textrm{d}t \geq  - \beta(x_0)
  \]
  for all input signals $u$ in $L_2$ and all initial conditions $x_0$
  and all $T>0$.
\end{definition}

\begin{figure}[t]
  \centering
  \begin{tikzpicture}[every node/.style={outer sep=0pt,thick},
    decorate, scale=0.25]
    \draw (4,6) rectangle (8,10) node[pos=.5] {\large $\Sigma_1$};
    \draw (10,6) rectangle (14,10) node[pos=.5] {\large $\Sigma_2$};
    \draw[very thick] (2,8) circle [radius=0.3];
    \node at (3,7)  {\tiny $-$};
    \node at (1,9)  {\tiny $+$};    
    \draw[-{Latex[length=2mm]}]  (8,8) -- (10,8);
    \draw[-{Latex[length=2mm]}]  (0,8) -- (1.7,8);
    \draw[-{Latex[length=2mm]}]  (2.3,8) -- (4,8);
    \draw[-{Latex[length=2mm]}]  (14,8) -- (18,8);
    \draw  (16,8) -- (16,5);
    \draw  (2,5) -- (16,5);
    \draw[-{Latex[length=2mm]}]  (2,5) -- (2,7.7);
  \end{tikzpicture}
  \caption{Standard interconnection of two systems $\Sigma_1$ and $\Sigma_2$ }\label{pic1a3}
\end{figure}
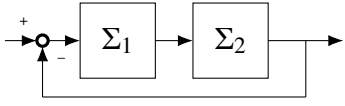

A standard misconception is that the interconnection of two passive
systems yields a stable interconnection. However, if in the
interconnection in Fig.\ \ref{pic1a3}, the, possibly nonlinear, systems
$\Sigma_1$ and $\Sigma_2$ are both passive then the interconnection
might still be unstable.

For that purpose the concepts of strict input- and output passivity
have been introduced:

\begin{definition}
  The system in Definition \ref{def1} is strictly-input passive if
  there exists a function $\beta$ and $\eps>0$ such that:
  \[
    \int_0^T y\T(t)u(t)\, \textrm{d}t \geq  \eps \int_0^T u\T(t)u(t)\,
    \textrm{d}t - \beta(x_0)
  \]
  for all input signals $u$ in $L_2$ and all initial conditions $x_0$
  and all $T>0$. Similarly, the system is called 
  strictly-output passive if there exists a function $\beta$ and
  $\delta>0$ such that:
  \[
    \int_0^T y\T(t)u(t)\, \textrm{d}t \geq  \delta \int_0^T y\T(t)y(t)\,
    \textrm{d}t - \beta(x_0)
  \]
  for all input signals $u$ in $L_2$ and for all initial conditions
  $x_0$ and all $T>0$.
\end{definition}

\begin{theorem}
  If in the interconnection in Fig.\ \ref{pic1a3} one of the systems
  is passive while the other system is strictly-input or -output
  passive then the interconnection is stable.
\end{theorem}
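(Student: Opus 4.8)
The plan is to construct a single Lyapunov/storage function for the whole loop as the sum of the storage functions of the two subsystems, and to show that along trajectories its increment is nonpositive, with a strictly negative contribution supplied by the strictly passive block. Without loss of generality assume $\Sigma_1$ is passive and $\Sigma_2$ is strictly-output (or strictly-input) passive. The first step is purely structural: I would read off from Fig.\ \ref{pic1a3} the interconnection equations $u_1 = e_1 - y_2$ and $u_2 = y_1$ (with $e_1$ an external excitation that I set to zero for the internal-stability conclusion). The decisive observation, which is exactly what makes passivity composable under negative feedback, is that in the unforced loop the instantaneous cross power cancels,
\[
  y_1\T u_1 + y_2\T u_2 = -y_1\T y_2 + y_2\T y_1 = 0,
\]
since $y_1\T y_2$ is scalar and equals $y_2\T y_1$.

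Next I would pass from the integral definitions to storage functions. By the passivity of $\Sigma_1$ the available storage $V_1(x_1)\geq 0$ is finite and satisfies $\int_0^T y_1\T u_1\,\textrm{d}t \geq V_1(x_1(T)) - V_1(x_1(0))$, and likewise for $\Sigma_2$ with $V_2\geq 0$, where strict-output passivity additionally yields the surplus term $\delta\int_0^T y_2\T y_2\,\textrm{d}t$. Setting $V = V_1 + V_2$, adding the two dissipation inequalities, and using the cross-term cancellation above gives
\[
  V(x(T)) - V(x(0)) \;\leq\; -\,\delta\int_0^T y_2\T y_2\,\textrm{d}t \;\leq\; 0,
\]
so $V$ is a nonincreasing, nonnegative function along trajectories and the origin of the interconnection is stable in the sense of Lyapunov. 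Asymptotic stability then follows by invoking LaSalle's invariance principle together with a zero-state detectability hypothesis that forces $y_2\equiv 0$ to imply convergence of the full state. If instead $\Sigma_2$ is strictly-input passive the surviving term is $\eps\int_0^T u_2\T u_2\,\textrm{d}t = \eps\int_0^T y_1\T y_1\,\textrm{d}t$ (because $u_2 = y_1$), and the same conclusion follows with the roles of the controlled signal interchanged.

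I expect the main obstacle to be the propagation of the single strict bound to all loop signals, rather than the cancellation itself. Strict-output passivity directly controls $\|y_2\|$ (strict-input passivity controls $\|u_2\|=\|y_1\|$), and one must then argue, through the loop equations and the cross-term cancellation, that the remaining loop signal is also bounded; in the pure input-output ($L_2$) reading this step is delicate and in general needs a finite-gain property of the merely-passive block, obtained here via the dissipation inequality and a Cauchy--Schwarz/Young estimate on $\langle y_1, e_1\rangle_T$. A secondary technical point is justifying that the available storage is finite and qualifies as a valid storage function from the integral definition in Definition \ref{def1}; this is standard but relies on the nonnegativity and the dissipation inequality for $V_a$, which I would state and use rather than re-derive.
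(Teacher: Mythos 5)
The paper does not actually prove this theorem: it is stated as a known result imported from the cited references (Kottenstette et al.\ and van der Schaft), so there is no in-paper argument to compare yours against. Your proposal is the standard storage-function proof of the passivity theorem, and it is essentially sound: the interconnection equations $u_1=-y_2$, $u_2=y_1$ give the cross-power cancellation $y_1\T u_1+y_2\T u_2=0$, the available storages $V_1,V_2$ exist and are nonnegative precisely because the paper's integral inequality holds, and summing the dissipation inequalities leaves only the strict surplus term, so $V=V_1+V_2$ is nonincreasing along closed-loop trajectories.

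The caveats you flag yourself are the real content of the gap between your sketch and a complete proof, and they are worth naming precisely. First, $V$ nonincreasing gives Lyapunov stability only if $V$ is positive definite, which the available storage need not be without a reachability/observability hypothesis; and asymptotic stability via LaSalle needs zero-state detectability, neither of which appears in the theorem as stated (the paper is implicitly assuming them, as do the cited sources). Second, in the strictly-input-passive branch the surplus controls $\|y_1\|=\|u_2\|$ but not $y_2$; bounding $y_2$ requires a finite-gain property of $\Sigma_2$ (or, in the output-strict branch, of the merely passive block $\Sigma_1$), which strict input passivity alone does not supply --- an integrator-like passive block shows the issue. So the theorem as literally written is slightly stronger than what your argument (or the standard one) delivers without these implicit hypotheses; your identification of exactly these two weak points is correct, and with them added your proof goes through.
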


This result is very general and applies to nonlinear systems as well
as linear systems. The following simple example illustrates
that more than passivity of each system is needed to guarantee the stability
of the closed loop

\begin{example}
  Consider the interconnection in Fig.\ref{pic1a3} with $\Sigma_1$
  and $\Sigma_2$ are identical linear systems with transfer function:
  \[
    G(s)=\frac{6s^2+s+3}{s+2+3s+2}
  \]
  It can then be easily verified (using Theorem \ref{theorem1a}
  presented below) that $\Sigma_1$ and $\Sigma_2$ are both
  passive. However, it is easily checked that the interconnection in
  Fig.\ \ref{pic1a3} is not stable and has unstable poles in $j$ and
  $-j$.
\end{example}

For \textbf{linear} systems passivity can be connected to the positive
real property of the transfer system. Positive real has a very
different history and is intrinsically related to linear systems. If
one has a given transfer system $G$, one can ask whether it is
possible to construct an RLC circuit such that the transfer matrix of
this RLC circuit is equal to the given transfer matrix. The answer
turned out to be that this is possible when the transfer matrix is
positive real. In the literature there are different versions of
positive real being used:

\begin{definition}
  \[
    \begin{system*}{ccl}
      \dot{x}(t) &=& Ax(t)+Bu(t),\qquad x(0)=x_0 \\
      y(t) &=& Cx(t)+Du(t)
    \end{system*}
  \]
  with transfer matrix $G$. Assume the above is a minimal
  realization. In that case:
  \begin{itemize}
  \item The system is called positive real if
       \begin{enumerate}
       \item $G$ has no poles in the open right half plane.
       \item $G(j\omega)+G^{*}(j\omega) \geq 0$ for all $\omega$ such
         that $j\omega$ is not a pole of $G$.
       \item For any purely imaginary pole $j\omega_0$ of $G$ the following 
         \[
           \lim_{\omega\rightarrow \omega_0} (\omega-\omega_0) G(j\omega)
         \]
         exists and the limit is a positive semi-definite matrix.
       \end{enumerate}
  \item The system is called strictly positive real if there exists
    $\eps>0$ such that $G(s-\eps)$ is positive real.
  \item The system is called strongly positive real if there exists
    $\delta>0$ such that
    \begin{enumerate}
    \item $G$ has no poles in the closed right half plane.
    \item $G(j\omega)+G^{*}(j\omega) > \delta I$ for all $\omega\in
      \R$.
    \end{enumerate}
  \item The system is called ``strong'' positive real if
    \begin{enumerate}
    \item $G$ has no poles in the closed right half plane.
    \item $G(j\omega)+G^{*}(j\omega) > 0$ for all $\omega\in
      \R$.
    \end{enumerate}
  \end{itemize}
\end{definition}

\begin{remark}
  In the conditions characterizing strongly positive real, the second
  condition is sometimes presented as:
  \[
    G(j\omega)+G^{*}(j\omega) > 0 \text{ for all } \omega\in
    [-\infty,\infty]
  \]
  which is equivalent to our formulation.
\end{remark}

Let us relate these different concepts:
\begin{lemma}
  We have that
  \begin{multline*}
    \text{strong positive real } \Longrightarrow \text{ strictly
      positive real } \\
    \Longrightarrow \text{``strong'' positive real }
    \Longrightarrow \text{positive real }
  \end{multline*}
\end{lemma}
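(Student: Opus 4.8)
The plan is to prove the three implications separately, relying throughout on two elementary facts. First, for any fixed vector $x\in\C^n$ the scalar function $z\mapsto x^{*}G(z)x$ is analytic wherever $G$ has no pole, so its real part $\re\left[x^{*}G(z)x\right]$ is harmonic and obeys the minimum principle. Second, since $G$ is a proper rational matrix, $G(s)\to D$ as $|s|\to\infty$, so the Hermitian part $G(j\omega)+G(j\omega)^{*}$ tends to $D+D\T$. I will also use that the poles of $G(s-\eps)$ are exactly the poles of $G$ shifted to the right by $\eps$, together with the classical fact (a consequence of the minimum principle and the imaginary--axis condition~2) that positive realness of a transfer matrix $H$ forces $H(s)+H(s)^{*}\geq0$ not only on the imaginary axis but throughout the open right half plane. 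Minimality of the realization enters only in the last implication.

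The implication ``strong'' positive real $\Rightarrow$ positive real is immediate: the absence of poles in the closed right half plane gives condition~1 and makes the imaginary--axis pole condition~3 vacuous, while $G(j\omega)+G(j\omega)^{*}>0$ trivially yields the $\geq0$ required in condition~2.

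For strong positive real $\Rightarrow$ strictly positive real I would first choose $\eps$ smaller than the distance from the (finitely many) poles of $G$ to the imaginary axis, so that $G(s-\eps)$ again has all its poles in the open left half plane; then conditions~1 and~3 for $G(s-\eps)$ hold automatically. The substance is condition~2, i.e.\ $M(\eps,\omega):=G(j\omega-\eps)+G(j\omega-\eps)^{*}\geq0$ for all $\omega$, where $M(0,\omega)>\delta I$ by hypothesis. I would split the frequency axis: on the tail $|\omega|>\Omega$, uniform convergence $G(j\omega-\eps)\to D$ together with $D+D\T\geq\delta I$ (the limit of the strict bound, cf.\ the Remark) gives $M(\eps,\omega)>\tfrac{\delta}{2}I$ for all small $\eps$; on the compact set $|\omega|\leq\Omega$, uniform continuity of $M$ in $(\eps,\omega)$ together with $M(0,\omega)>\delta I$ yields the same bound once $\eps$ is small enough. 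Combining the two ranges makes $M(\eps,\omega)\geq0$ for a single small $\eps$, so $G(s-\eps)$ is positive real. The only genuine difficulty here is the uniformity of the estimate over the unbounded frequency axis, which is precisely what the behaviour at infinity resolves.

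The delicate implication is strictly positive real $\Rightarrow$ ``strong'' positive real. Positive realness of $H=G(\cdot-\eps)$ forbids poles of $H$ in the open right half plane, hence every pole of $G$ has real part at most $-\eps<0$, giving condition~1. For condition~2 I would translate the open--half--plane inequality $H(s)+H(s)^{*}\geq0$ back to $G(z)+G(z)^{*}\geq0$ on $\{\re z>-\eps\}$; the imaginary axis lies strictly inside this region, so $G(j\omega)+G(j\omega)^{*}\geq0$ at once. The hard part is upgrading this to a strict inequality. I would argue by contradiction: if $x^{*}\bigl(G(j\omega_{0})+G(j\omega_{0})^{*}\bigr)x=0$ for some unit vector $x$ and some $\omega_{0}$, then the harmonic function $\re\left[x^{*}G(z)x\right]$ attains its minimum value $0$ at the interior point $j\omega_{0}$ and is therefore identically zero, so $x^{*}G(z)x$ is a purely imaginary constant. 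Excluding this degenerate ``lossless direction'' is where the argument is genuinely subtle, and it is here that minimality (and the structure of $D+D\T$ at infinity) must be invoked to force $x^{*}G(z)x$ to be nonconstant. I expect this strictness step, rather than any of the analytic--continuation or pole--counting bookkeeping, to be the main obstacle.
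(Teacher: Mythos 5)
The paper states this lemma without any proof, so your proposal has to stand on its own. Your treatment of the outer two implications is correct: the deduction ``strong'' positive real $\Rightarrow$ positive real is indeed immediate, and your split of the frequency axis for strong $\Rightarrow$ strictly positive real works, since $G(j\omega-\eps)-D$ decays uniformly in small $\eps$ as $|\omega|\to\infty$ while $D+D\T\geq\delta I$ is inherited in the limit from the strict frequency-domain bound; together with choosing $\eps$ below the distance of the poles to the imaginary axis this makes $G(s-\eps)$ positive real.

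The genuine gap is exactly where you locate it, in strictly positive real $\Rightarrow$ ``strong'' positive real, but it is worse than an unfinished detail: the strictness step you defer to ``minimality and the structure of $D+D\T$ at infinity'' cannot be closed by those means. Your minimum-principle argument correctly reduces the problem to excluding a direction $x$ with $x^{*}G(z)x$ a purely imaginary constant, but such lossless directions genuinely occur for systems that are strictly positive real under the paper's definition. Take
\[
G(s)=\begin{pmatrix} \tfrac{1}{s+1} & 1 \\ -1 & 0 \end{pmatrix},
\]
which has a minimal one-state realization: $G(s-\eps)$ is positive real for every $\eps\in(0,1)$ because its Hermitian part on the imaginary axis is $\diag\bigl(\tfrac{2(1-\eps)}{(1-\eps)^2+\omega^2},\,0\bigr)\geq 0$, yet $G(j\omega)+G^{*}(j\omega)$ is singular for every $\omega$ (the direction $x=e_2$ is lossless), so $G$ is not ``strong'' positive real. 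In the scalar case your contradiction does go through (a real-rational $G$ with $G\equiv jc$ forces $c=0$, hence $G\equiv 0$), but for MIMO transfer matrices the middle implication requires an additional hypothesis ruling out lossless directions, and no appeal to minimality will supply it. You were right to flag this step as the main obstacle; the honest conclusion is that it is not merely subtle but fails in the stated generality, which may be why the paper offers no proof.
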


\begin{remark}
  None of these implications can be reversed in the above lemma.
  The system with transfer function:
  \[
    \frac{s+3}{(s+1)(s+2)}
  \]
  is strictly positive real but not strongly positive real while the
  system with transfer function:
  \[
    \frac{1}{s+1}
  \]
  is strictly positive real but not strongly positive real. Finally,
  the system with transfer function:
  \[
    \frac{6s^2+s+3}{s^2+3s+2}
  \]
  is positive real but not ``strong'' positive real.
\end{remark}

The following theorem connects the properties of positive real and
passivity for linear systems:

\begin{theorem}\label{theorem1a}
  Consider a linear system $\Sigma$ given by:
  \[
    \begin{system*}{ccl}
      \dot{x}(t) &=& Ax(t)+Bu(t),\qquad x(0)=x_0 \\
      y(t) &=& Cx(t)+Du(t)
    \end{system*}
  \]
  with transfer matrix $G$. Assume the above is a minimal
  realization. In that case
  \begin{itemize}
  \item The system $\Sigma$ is passive if and only if the transfer
    matrix $G$ is positive real.
  \item The system $\Sigma$ is strictly-input passive if and only if the transfer
    matrix $G$ is strongly positive real
  \end{itemize}
\end{theorem}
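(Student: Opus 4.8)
The engine for both equivalences is the Positive Real Lemma (the Kalman--Yakubovich--Popov lemma), which I would use as the bridge between the frequency-domain conditions defining positive realness and a time-domain dissipation inequality of the form $\dot V(x)\leq y\T u$ for a quadratic storage function $V(x)=\tfrac12 x\T P x$. The plan is to prove the two ``if'' directions by producing such a $V$ from the lemma, and the two ``only if'' directions by extracting the frequency-domain inequalities from the passivity integral through a steady-state energy argument.

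For \emph{positive real $\Rightarrow$ passive}, minimality together with positive realness lets the Positive Real Lemma supply $P=P\T\geq 0$ and matrices $L,W$ with
\[ A\T P+PA=-L\T L,\quad PB-C\T=-L\T W,\quad D+D\T=W\T W. \]
Setting $V(x)=\tfrac12 x\T P x\geq 0$ and substituting these identities gives $\dot V=y\T u-\tfrac12\lVert Lx+Wu\rVert^2\leq y\T u$, so integrating on $[0,T]$ yields $\int_0^T y\T u\,\mathrm{d}t\geq V(x(T))-V(x_0)\geq -V(x_0)$, which is passivity with $\beta(x_0)=V(x_0)$. This is the clean direction and I expect no difficulty.

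For \emph{passive $\Rightarrow$ positive real}, I would work with the available storage $V_a(x_0)=\sup_{u,T}\bigl(-\int_0^T y\T u\,\mathrm{d}t\bigr)$, which passivity forces to be finite and nonnegative, and which minimality makes a genuine quadratic form $\tfrac12 x_0\T P x_0$ with $P\geq 0$. Driving the system with $u(t)=\re(v\,e^{j\omega t})$ and averaging over a long horizon isolates the steady-state power $\tfrac14 v^{*}\bigl(G(j\omega)+G^{*}(j\omega)\bigr)v$; since $-\beta(x_0)$ stays fixed as $T\to\infty$, passivity forces this to be $\geq 0$, giving condition~2. Absence of open right-half-plane poles follows because an antistable mode would let one drive $-\int y\T u$ to $+\infty$, contradicting finiteness of $V_a$, and the residue condition at an imaginary-axis pole $j\omega_0$ comes from a Laurent expansion of $G$ near $j\omega_0$ fed into the same energy argument.

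For the strict equivalence I would reduce to the first one via the output shift $\hat y=y-\eps u$, whose transfer matrix is $G-\eps I$: strict-input passivity of $\Sigma$ is exactly passivity of the (still minimal) shifted system, so $G-\eps I$ is positive real, which at once gives $G(j\omega)+G^{*}(j\omega)\geq 2\eps I$ and $D+D\T\geq 2\eps I$, i.e.\ condition~2 of strong positive realness. Conversely, strong positive realness makes $A$ Hurwitz, so the strict Positive Real Lemma supplies $P>0$ and $\eps>0$ with $\dot V\leq y\T u-\eps\lVert u\rVert^2$, and integrating gives strict-input passivity. The hard part will be the remaining piece of this converse: deducing from strict-input passivity that $G$ has \emph{no} poles in the closed right half plane. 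The uniform bound $G+G^{*}\geq 2\eps I$ alone does not forbid an imaginary-axis pole whose residue is Hermitian positive semidefinite, so this step must lean essentially on minimality, arguing that such a pole would break the excess-dissipation balance; this is the most delicate point and is where the precise strength of the definitions is decisive.
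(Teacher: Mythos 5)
The paper states Theorem~\ref{theorem1a} without any proof (it is quoted as a known result from the passivity literature), so there is no in-paper argument to compare against; your KYP/storage-function route is the standard one. The first equivalence is in order: minimality plus positive realness yields the Positive Real Lemma triple $(P,L,W)$ and the dissipation inequality $\dot V\leq y\T u$, and conversely the available-storage and sinusoidal steady-state arguments recover the frequency-domain conditions, with the residue condition handled by a Laurent expansion as you indicate.

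The genuine gap is in the second equivalence, and it is not merely ``delicate'' as you suggest: the step you flag cannot be completed with the paper's definitions, because the implication is false. Your output shift correctly shows that strict-input passivity is equivalent to passivity of the shifted system with transfer matrix $G-\eps I$, hence to positive realness of $G-\eps I$; this gives $G(j\omega)+G^{*}(j\omega)\geq 2\eps I$ away from poles and excludes open right-half-plane poles, but it does not exclude imaginary-axis poles with positive semidefinite residues, and nothing else in the hypothesis does either. Concretely, the minimal scalar system $\dot x=u$, $y=x+\eps u$, with transfer function $G(s)=\eps+1/s$, satisfies $\int_0^T yu\,\mathrm{d}t=\tfrac12 x(T)^2-\tfrac12 x_0^2+\eps\int_0^T u^2\,\mathrm{d}t\geq \eps\int_0^T u^2\,\mathrm{d}t-\tfrac12 x_0^2$, so it is strictly-input passive with $\beta(x_0)=\tfrac12 x_0^2$, yet $G$ has a pole at $s=0$ and is therefore not strongly positive real in the sense defined here (which demands no poles in the \emph{closed} right half plane). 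The ``only if'' half of the second bullet therefore cannot be proved as stated; the correct classical statement requires either adding finite $L_2$-gain (equivalently, stability) as a hypothesis alongside strict-input passivity, or dropping the pole condition from the definition of strongly positive real. The converse direction of that bullet (strongly positive real $\Rightarrow$ strictly-input passive via the strict KYP lemma with $A$ Hurwitz and $D+D\T>0$) is fine.
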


\begin{remark}
  Note that strictly-input passive implies strictly-output
  passive. However, a characterization of strict-output passive in
  terms of a version of positive real is quite involved and will not
  presented in this paper.
\end{remark}

Note that the following results exist regarding the stability of the
interconnection in Fig.\ \ref{pic1a3}.

\begin{theorem}\label{th6}
  Assume in the interconnection in Fig.\ \ref{pic1a3} both systems
  are linear time-invariant with a rational transfer matrix with one of
  the systems positive real wheree the other system is ``strong''
  positive real then the interconnection is stable.

  Assume in the interconnection in Fig.\ \ref{pic1a3} both systems
  are linear time-invariant with a possibly irrational transfer
  matrix where one of the systems is positive real while the other
  system is strictly positive real then the interconnection is stable.
\end{theorem}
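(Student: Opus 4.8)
The plan is to reduce both statements to a single frequency-domain non-vanishing result for the return difference and to isolate the behaviour at infinity as the one place where the rational and irrational cases genuinely differ. Write the two systems as $G_1$ (the ``strong''/strictly positive real one) and $G_2$ (the positive real one). Since the interconnection in Fig.\ \ref{pic1a3} is a negative feedback loop, stability is equivalent to the closed-loop map $(I+G_1G_2)^{-1}$ being analytic and bounded in the closed right half plane; because positive realness already rules out open right half plane poles in each factor and the ``strong''/strictly positive real factor has no poles in the closed right half plane, the question essentially comes down to showing that
\[
  \det\bigl(I+G_1(s)G_2(s)\bigr)\neq 0\qquad\text{for all }s\text{ in the closed right half plane.}
\]

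The key step is an algebraic cross-computation. Suppose, for contradiction, that $(I+G_1(s_0)G_2(s_0))v=0$ for some $s_0$ with $\re s_0\geq 0$ and $v\neq 0$. Put $w=G_2(s_0)v$, so that $G_1(s_0)w=-v$. Then $\re(v^{*}w)=\tfrac12 v^{*}(G_2+G_2^{*})v\geq 0$ by positive realness of $G_2$, while $\re(w^{*}v)=-\re(w^{*}G_1 w)=-\tfrac12 w^{*}(G_1+G_1^{*})w\leq 0$ because $G_1$ has a positive semidefinite Hermitian part. Since these two real parts are equal they must both vanish, and the strict inequality $G_1+G_1^{*}>0$ forces $w=0$, whence $v=-G_1(s_0)w=0$, a contradiction. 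To run this argument I first need $G_1(s)+G_1(s)^{*}>0$ and $G_2(s)+G_2(s)^{*}\geq 0$ not only on the imaginary axis but throughout the open right half plane; this I would obtain from the classical half-plane characterization of positive realness together with the minimum principle applied to the harmonic function $\re(v^{*}G_1(s)v)$, whose boundary values on the imaginary axis are strictly positive.

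Two places need genuine care and I expect the behaviour at $s=\infty$ to be the main obstacle. First, the positive real factor $G_2$ may have poles on the imaginary axis; at such a point $j\omega_0$ the product $G_1G_2$ blows up, so I would handle these modes separately using the residue condition (the third defining property of positive realness), which makes them behave as lossless elements that the definite Hermitian part of $G_1$ damps in closed loop. Second, and more importantly, the cross-computation controls only finite right half plane points, so one must still exclude a determinant zero escaping to infinity, i.e.\ $\det(I+G_1(\infty)G_2(\infty))=0$ or determinant zeros accumulating at high frequency. This is exactly the split between the two statements. In the rational case the determinant is a rational function with finitely many zeros and a well-defined limit at infinity, so once well-posedness ($\det(I+G_1(\infty)G_2(\infty))\neq0$) is assumed, the strict inequality at finite frequencies already suffices and ``strong'' positive realness is enough. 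In the irrational case no such finiteness is available, and I would instead use that strict positive realness means $G_1(s-\eps)$ is positive real for some $\eps>0$: this supplies a uniform margin $\re(v^{*}G_1(s)v)\geq 0$ on the shifted line $\re s=-\eps$, which by the same minimum-principle argument keeps the Hermitian part uniformly positive on the imaginary axis and is what I expect to close the high-frequency gap that the rational argument handled by finiteness.

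Finally, I would note the alternative, more conceptual route that stays within the passivity framework already developed: by Theorem \ref{theorem1a} the positive real system is passive, and if one can show that ``strong'' positive realness (rational case) and strict positive realness (irrational case) each imply strict-output passivity, then the general passivity theorem immediately yields stability. This is attractive because it reuses the machinery verbatim, but the paper itself flags that characterizing strict-output passivity through a positive real condition is ``quite involved,'' so I expect establishing those two implications to be no easier than the frequency-domain obstacle above; it is the same high-frequency/infinity issue reappearing in time-domain guise.
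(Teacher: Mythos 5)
First, a point of reference: the paper does not actually prove Theorem~\ref{th6}; it is quoted as a known classical result, so there is no in-paper argument to compare yours against. Your core device --- the cross-computation showing that $(I+G_1(s_0)G_2(s_0))v=0$ with $w=G_2(s_0)v$ forces $\re(v^{*}w)=0$ and $w^{*}(G_1(s_0)+G_1(s_0)^{*})w=0$, hence $w=0$ and $v=-G_1(s_0)w=0$ when the Hermitian part of $G_1$ is definite --- is the standard and correct heart of such a proof, and you are right that the behaviour at infinity is where the rational and irrational hypotheses part ways.

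The genuine gap is in your treatment of the irrational case. You claim that strict positive realness (i.e.\ $G_1(s-\eps)$ positive real for some $\eps>0$) ``keeps the Hermitian part uniformly positive on the imaginary axis.'' It does not: the paper's own example $G(s)=1/(s+1)$ is strictly positive real, yet $\re G(j\omega)=1/(1+\omega^{2})\rightarrow 0$, so no uniform lower bound exists. More fundamentally, for irrational transfer matrices pointwise non-vanishing of $\det\bigl(I+G_1(s)G_2(s)\bigr)$ on the closed right half plane is not equivalent to stability: one needs $(I+G_1G_2)^{-1}$ to lie in $H_\infty$, i.e.\ a uniform lower bound on the smallest singular value of $I+G_1(s)G_2(s)$ over the half plane, and your minimum-principle argument delivers only pointwise strict positivity, which may degenerate as $|s|\rightarrow\infty$. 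Closing this requires a different mechanism --- e.g.\ a loop-shifting or multiplier argument that trades the $\eps$-shift of $G_1$ against the nonnegativity of $G_2$, or the $L_2$ input--output passivity theorem with an explicit energy estimate --- rather than the finiteness-of-zeros argument that rescues the rational case. The secondary gaps (well-posedness at infinity being assumed rather than derived, the imaginary-axis poles of $G_2$ dispatched with the phrase ``damps in closed loop,'' and the hidden-mode caveat of Theorem~\ref{theo1} when passing to the determinant) are real but more routine to fill.
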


\begin{remark}
  Note that irrational transfer matrix might arise due to, for
  instance, time delays.  
\end{remark}

The key strength of passivity is that one can arbitrarily connect
strict passive systems and preserve passivity and stability. For
instance, if one ensure a set of IBRs is strictly passive and they are
connected through an arbitrary network then passivity is preserved
assuming the network itself is passive. The latter is actually quite
natural since, for instance, a power line logically cannot be an
energy source.

\section{Mixed small gain theorem and positive real}\label{IIIe}

In the paper \cite{griggs-anderson-lanzon} the concepts of positive
real and the small-gain theorem are mixed which considers a
system with transfer matrix $G$ for which there exists a $c>0$ such
that:
\begin{itemize}
\item $G(j\omega)+G^{*}(j\omega) >0$ for $|\omega|\leq c$
\item $\sup_{\omega\in\R, |\omega|>c}  \| G(j\omega) \| < 1$ 
\end{itemize}

\begin{theorem}
  Consider the interconnection in Fig.\ref{pic1a3}. Assume both
  systems $\Sigma_1$ and $\Sigma_2$ are stable and have a transfer
  function satisfy the above mixed condition for the same $c$. In that
  case, the interconnection is stable.
\end{theorem}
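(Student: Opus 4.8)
The plan is to denote by $G_1$ and $G_2$ the transfer matrices of $\Sigma_1$ and $\Sigma_2$ and to reduce the claim to a statement about the return difference $\det(I+G_1 G_2)$. Since both systems are stable, the loop gain $G_1 G_2$ is analytic and bounded in the closed right half plane, and the interconnection of Fig.\ \ref{pic1a3} is stable if and only if $(I+G_1 G_2)^{-1}$ is stable, i.e.\ $\det\bigl(I+G_1(s)G_2(s)\bigr)\neq 0$ for every $s$ in the closed right half plane. As both systems have no unstable poles, the generalized Nyquist criterion reduces this to two facts: (i) $\det\bigl(I+G_1(j\omega)G_2(j\omega)\bigr)\neq 0$ for all $\omega\in[-\infty,\infty]$, and (ii) the Nyquist plot of this determinant makes zero net encirclements of the origin. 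The frequency axis is split exactly as in the hypothesis, using positive realness on $|\omega|\le c$ and the small-gain bound on $|\omega|>c$ (the common $c$ being what makes the split consistent).

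For the low-frequency band $|\omega|\le c$, I would show that $\re(G_1)>0$ and $\re(G_2)>0$ (i.e.\ $G_i+G_i^*>0$) already force $I+G_1 G_2$ to be nonsingular. Suppose not: then $G_1 G_2 x=-x$ for some $x\neq 0$; setting $w=G_2 x$ one has $x=-G_1 w$ with $w\neq 0$. Evaluating the scalar $x^* w$ in two ways gives $x^* w=x^* G_2 x$ and $x^* w=-w^* G_1^* w$; taking real parts yields $\tfrac12 x^*(G_2+G_2^*)x=-\tfrac12 w^*(G_1+G_1^*)w$, whose left side is strictly positive and right side strictly negative, a contradiction. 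This is the matrix analogue of the scalar fact that $\re a>0$ and $\re b>0$ preclude $1+ab=0$.

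For the high-frequency band $|\omega|>c$, submultiplicativity gives $\|G_1(j\omega)G_2(j\omega)\|\le\|G_1(j\omega)\|\,\|G_2(j\omega)\|<1$, uniformly bounded below $1$; hence $-1$ cannot be an eigenvalue of $G_1 G_2$ and $\det(I+G_1 G_2)\neq 0$. The same uniform bound covers the limit $\omega\to\pm\infty$. Together with the previous paragraph this establishes (i) on all of $[-\infty,\infty]$.

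The hard part will be (ii), the encirclement count, since the positive-real information lives only on the imaginary axis and cannot be invoked directly in the open right half plane. I would handle it by the homotopy $\tau\mapsto\det\bigl(I+\tau G_1(j\omega)G_2(j\omega)\bigr)$, $\tau\in[0,1]$. Writing $\tau G_1 G_2=(\tau G_1)G_2$, the Hermitian part of $\tau G_1$ stays positive definite for $\tau>0$, so the low-frequency argument applies verbatim, while $\|\tau G_1 G_2\|\le\tau<1$ keeps the high-frequency argument intact. Thus the homotopy never meets the origin for any $\omega$ and any $\tau\in[0,1]$. At $\tau=0$ the plot is the constant point $1$ with zero encirclements, so by homotopy invariance of the winding number the plot at $\tau=1$ also encircles the origin zero times, giving (ii) and hence closed-loop stability. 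It is worth stressing that the common threshold $c$ is essential: with mismatched thresholds there would be a band in which one system is only small-gain while the other is only positive real, and neither part of the argument would close.
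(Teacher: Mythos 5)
Your proposal is correct. Note that the paper itself gives no proof of this theorem --- it is stated as a result imported from \cite{griggs-anderson-lanzon} --- so there is nothing to compare against line by line; but your argument is sound and is essentially the standard route for such ``mixed'' results: split the frequency axis at the common threshold $c$, rule out singularity of $I+G_1(j\omega)G_2(j\omega)$ on the low band by the positive-realness contradiction (your computation $\tfrac12 x^*(G_2+G_2^*)x=-\tfrac12 w^*(G_1+G_1^*)w$ is exactly right, including the check that $w\neq0$), rule it out on the high band by submultiplicativity, and then kill the encirclement count with the homotopy $\tau\mapsto\det(I+\tau G_1G_2)$, observing that both band arguments survive scaling of one factor by $\tau\in(0,1]$. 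Your remark that the common $c$ is what makes the two bands cover all frequencies without a gap is precisely the limitation the paper emphasizes after the theorem statement. Two small points worth tightening: the high-frequency bound under the homotopy should read $\|\tau G_1G_2\|\leq\tau\,\|G_1\|\,\|G_2\|<1$ (your ``$\leq\tau<1$'' fails at $\tau=1$ as written, though the correct bound is immediate from the hypothesis); and since both subsystems are assumed stable, you should note explicitly that nonvanishing of $\det(I+G_1(s)G_2(s))$ on the closed right half plane, including at $s=\infty$ (well-posedness, which your high-frequency bound supplies), is indeed equivalent to internal stability of the loop --- the paper's Theorem \ref{theo1} is the relevant statement, and the ``hidden mode'' caveat there is vacuous here because $G_1G_2$ has no unstable poles.
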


The limitation of this theorem is that the components must satisfy these
conditions for the same bandwidth $c$.

If all inverters in a network are strictly passive then the
interconnection of all these inverters through a network is always
stable since obviously the network is  passive.

When one tries to achieve the same type of results here, it should be
noted that even if all the inverters satisfy this mixed property for
the same $c$, their interconnection through the network might still
yield an unstable system. After all, even though the network is
passive, this does not imply that it also satisfies such a mixed
property. 

\section{Conclusion}\label{IV}

This paper provided a critical review of the commonly used stability
analysis methods for IBRs. It presented authors' collective findings,
derived from a comprehensive examination of a large body of literature
on the stability of IBRs. Specifically, the paper clarified the
intended applications of the stability analysis methods and discussed
potential sources of misinterpretation or improper use. It underscored
the necessity of employing modern control theory for MIMO systems in
the stability analysis of IBRs. This necessity arises because
classical frequency‑domain techniques from the 1980s and 1990s,
developed as extensions of SISO approaches to MIMO systems, now
broadly regarded as inadequate for such analyses. Several of their key
limitations were discussed and illustrated in this paper.

\bibliographystyle{IEEEtran}
\bibliography{referenc}

\begin{IEEEbiography}[{\includegraphics[width=1.25in,height=1.5in,clip,keepaspectratio]
    {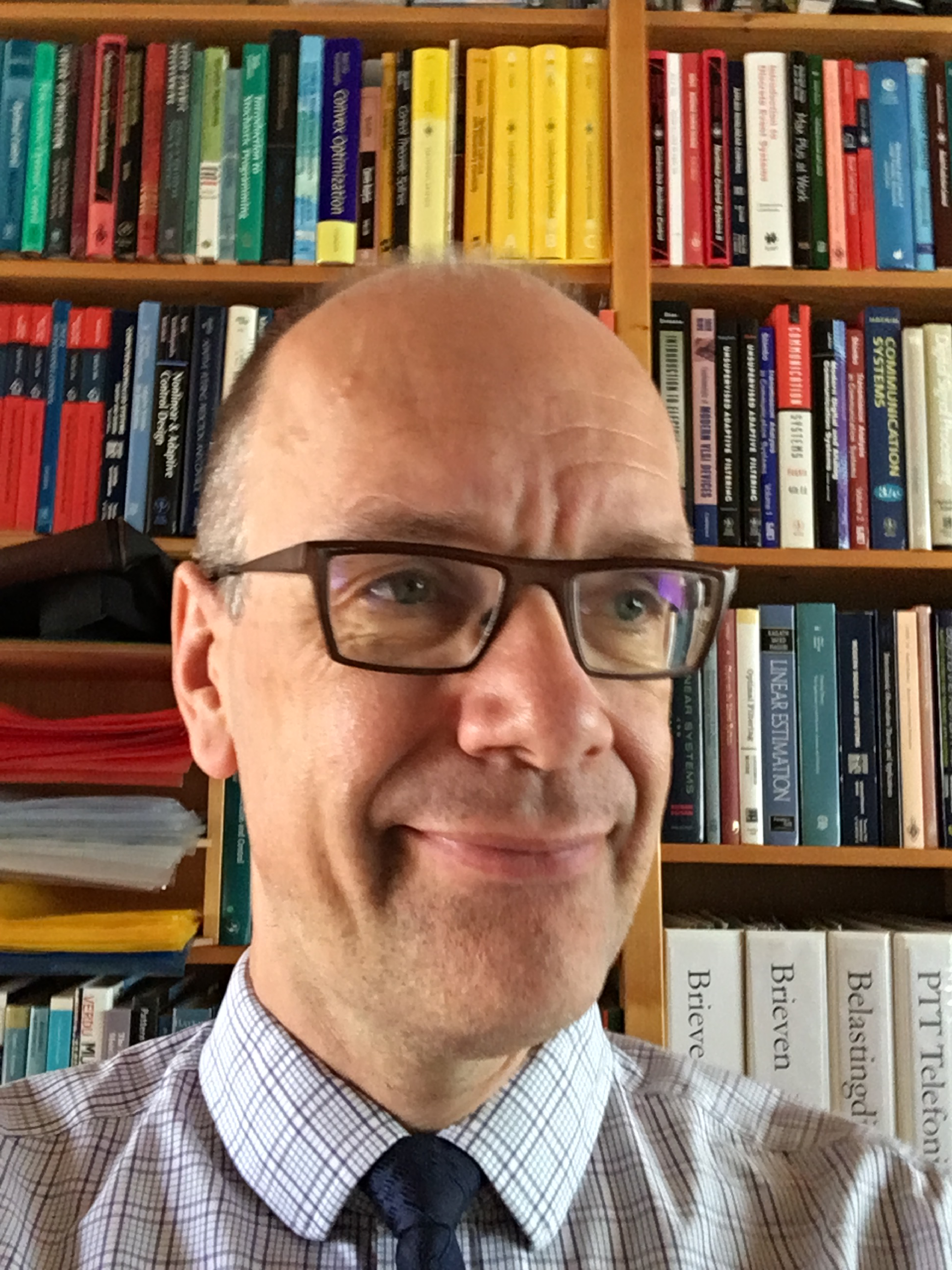}}]{Anton A. Stoorvogel}
  received the M.Sc. degree in Mathematics from Leiden University in
  1987 and the Ph.D. degree in Mathematics from Eindhoven University
  of Technology, the Netherlands in 1990. Currently, he is a professor
  in systems and control theory at the University of Twente, the
  Netherlands.  Anton Stoorvogel is the author of six books and
  numerous articles. He is and has been on the editorial board of
  several journals.
\end{IEEEbiography}%
\begin{IEEEbiography}[{\includegraphics[width=1in,height=1.25in,clip,keepaspectratio]
    {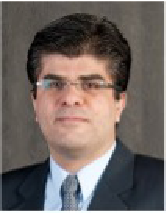}}]{Saeed Lotfifard}
  (S’08–M’11-SM’17) received his Ph.D. degree in
  electrical engineering from Texas A\&M University, College Station,
  TX, in 2011. Currently, he is an Associate Professor at Washington
  State University, Pullman. His research interests include stability,
  protection, and control of inverter based power
  systems. Dr. Lotfifard serves as an Editor for the IEEE Transaction
  on Power Delivery.
\end{IEEEbiography}%
\begin{IEEEbiography}[{\includegraphics[width=1.25in,height=1.5in,clip,keepaspectratio]
    {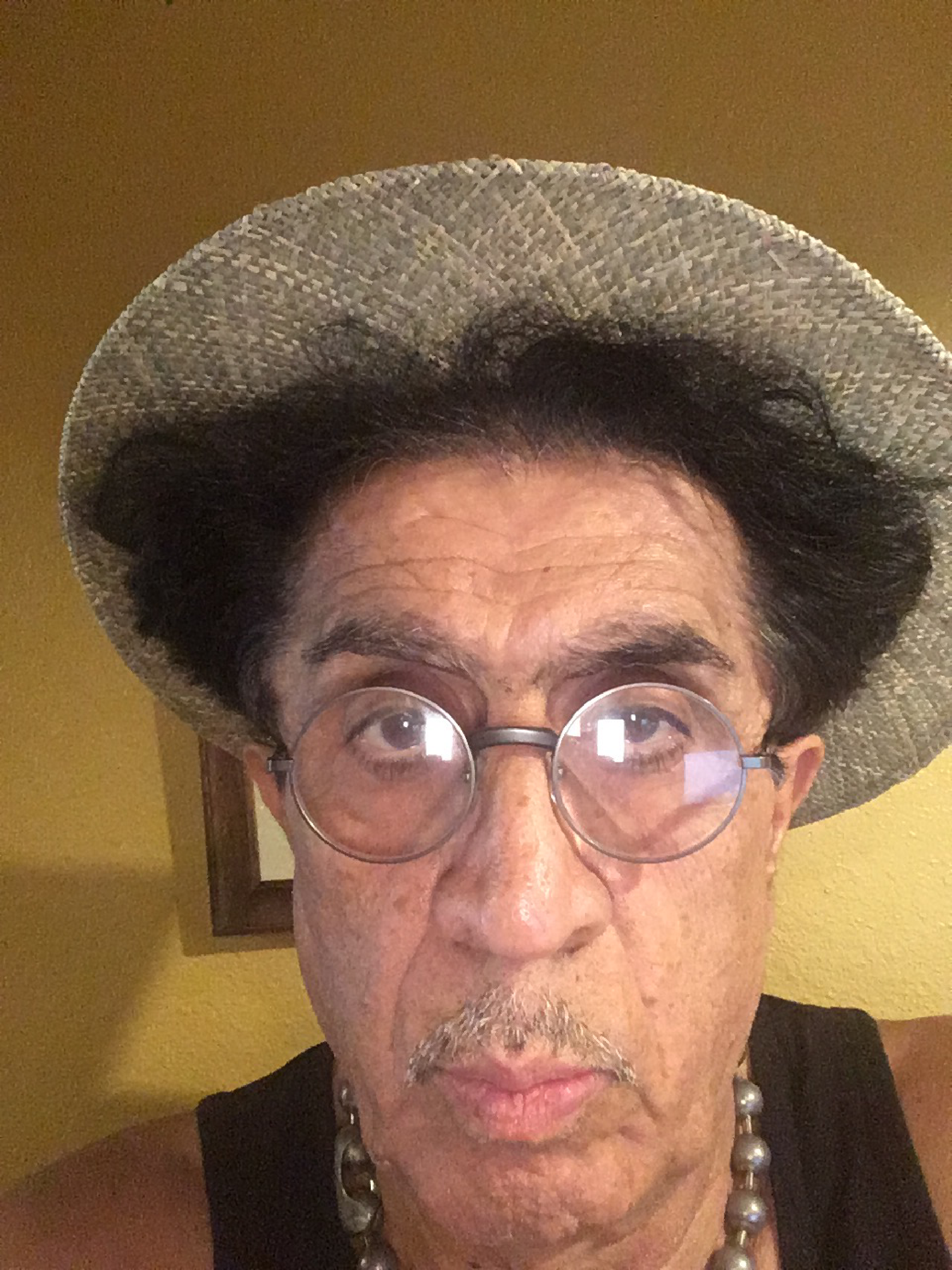}}]{Ali Saberi}
  lives and works in Pullman, Washington.
\end{IEEEbiography}
\vfill

\mbox{}

\end{document}